\newcommand\rurl[1]{%
  \href{http://#1}{\nolinkurl{#1}}%
}
\let\abs\relax
\newcommand{\abs}[1]{\left\lvert#1\right\rvert}
\newcommand{\norm}[1]{\left\lVert#1\right\rVert}
\newcommand{\infnorm}[1]{\left\lVert#1\right\rVert_\infty}
\newcommand{\linfnorm}[1]{\left\lVert#1\right\rVert_{\mathcal{L}_\infty}}
\newcommand{\linfnormtruc}[2]{\left\lVert#1\right\rVert_{\mathcal{L}_\infty^{[0,#2]}}}
\newcommand{\lonenorm}[1]{\left\lVert#1\right\rVert_{\mathcal{L}_1}}
\def\lone{{\mathcal{L}_1}}
\def\lonew{${\mathcal{L}_1}$ }
\def\laplace#1{\mathfrak{L}\left[#1\right]}
\def \loneAC {$\lone$AC}
\def\nt {\textup{n}}
\def\tilx{\tilde{x}}
\def \hsigma{\hat{\sigma}}
\def\xin{x_\textup{in}}
\def\xr{x_\textup{r}}
\def \ur{u_\textup{r}}
\def \rt {\textup{r}}
\def \rhoin {\rho_\textup{in}}
\def \Zn {\mathbb{Z}_1^n}
\def \Zm {\mathbb{Z}_1^m}
\def \tint {\textup{int}}
\def \xn {x_\textup{n}}
\def \yn {y_\textup{n}}
\def \un {u_\textup{n}}
\def\mbR{\mathbb{R}}
\def\mbZ{\mathbb{Z}}
\def\mbZ{\mathbb{Z}}
\def\hsigma{\hat{\sigma}}
\def\mcC{\mathcal{C}}
\def\mcX{\mathcal{X}}
\def\mcH{\mathcal{H}}
\def\mcU{\mathcal{U}}
\def\mcG{\mathcal{G}}
\def\mrx{\mathrm{x}}
\def\mru{\mathrm{u}}
\def\mry{\mathrm{y}}
\def\trieq{\triangleq}
\newtheorem{theorem}{Theorem}
\newtheorem{lemma}{Lemma}
\theoremstyle{definition}  \newtheorem{definition}{Definition}
\theoremstyle{definition} \newtheorem{assumption}{Assumption}
\theoremstyle{remark}  
\newtheorem{remark}{Remark}
\def\cl@part {\@elt {chapter}}
\crefname{equation}{}{} 
\crefname{lemma}{Lemma}{Lemmas}
\crefname{theorem}{Theorem}{Theorems}
\crefname{table}{Table}{Tables}
\crefname{figure}{Fig.}{Figs.}
\crefname{remark}{Remark}{Remarks}
\crefname{assumption}{Assumption}{Assumptions}
\crefname{section}{Section}{Sections}
\crefname{definition}{Definition}{Definitions}
\crefname{algorithm}{Algorithm}{Algorithms}
\renewcommand*\env@matrix[1][\arraystretch]{%
  \edef\arraystretch{#1}%
  \hskip -\arraycolsep
  \let\@ifnextchar\new@ifnextchar
  \array{*\c@MaxMatrixCols c}}
\def\mbZ{\mathbb{Z}}
\def\mcC{{\mathcal{C}}}
\def\mcY{{\mathcal{Y}}}
\def\mcZ{{\mathcal{Z}}}
\def \mcU{{\mathcal{U}}}
\def \url {u_\textup{RL}}
\def \loneAC {$\lone$AC}
\def \loneRG {{$\lone$-RG}}
\def \Oinf {O_\infty}
\def \tilOinf {{\tilde O}_\infty
}
\def \xcheckin  {{\check x}_\textup{in}}
\newcommand{\Tau}{\mathrm{T}}
\def \xcheckn {{\check x}_\textup{n}}
\newcommand{\interior}[1]{\textup{int}(#1)}
\def \mcV {\mathcal{V}}
\def \at {\textup{a}}
\begin{document}

\title{
Integrated Adaptive Control and Reference Governors for Constrained Systems with State-Dependent Uncertainties}

\author{Pan Zhao$^{1,\ast}$,~\IEEEmembership{Member,~IEEE,}
 Ilya Kolmanovsky$^2$,~\IEEEmembership{Fellow,~IEEE}
Naira Hovakimyan$^1$,~\IEEEmembership{Fellow,~IEEE}
\thanks{This work is supported by AFOSR, NASA and NSF under the NRI grant \#1830639, CPS grant \#1932529, and AI Institute: Planning grant \#2020289.}
\thanks{$^1$P.~Zhao and N.~Hovakimyan are with the Department of Mechanical Science and Engineering, University of Illinois at Urbana-Champaign, Urbana, IL 61801, USA. Email: \texttt{\{panzhao2, nhovakim\}@illinois.edu}. Corresponding author: P.~Zhao.}
\thanks{$^2$I.~Kolmanovsky is with the Department of Aerospace Engineering, University of Michigan, Ann Arbor, MI 48109, USA. Email: \texttt{ilya@umich.edu}.}
} 
\maketitle
\begin{abstract}
This paper presents an adaptive reference governor (RG) framework for a linear system with matched nonlinear  uncertainties that can depend on both time and states, subject to both state and input constraints. The proposed framework leverages an \lonew adaptive controller (\loneAC) that estimates and compensates for the uncertainties, and provides guaranteed transient performance, in terms of uniform bounds on the error between actual states and inputs and those of a nominal
(i.e., uncertainty-free) system. The uniform performance bounds provided by the \loneAC~are used to tighten the pre-specified state and control constraints. A reference governor is then designed for the nominal system using the tightened constraints, and guarantees robust constraint satisfaction. Moreover,  the conservatism introduced by the constraint tightening can be systematically reduced by tuning some parameters within the \loneAC. Compared with existing solutions, the proposed adaptive RG framework can potentially yield less conservative results for constraint enforcement due to the removal of uncertainty propagation along a prediction horizon, and improved tracking performance due to the inherent uncertainty compensation mechanism. Simulation results for a flight control example  illustrate the efficacy of the proposed framework. 
\end{abstract}
\begin{IEEEkeywords}
Constrained Control; Robust Adaptive control; Uncertain Systems; Reference Governor
\end{IEEEkeywords}
\onehalfspacing
\section{Introduction}\label{sec:intro}
There has been a growing interest in developing control methods that can handle state and/or input constraints. Examples of such constraints include actuator magnitude and rate limits, bounds imposed on process variables to ensure safe and efficient system operation, and collision/obstacle avoidance requirements. There are several choices for a control practitioner when dealing with constraints. One choice is to adopt the model predictive control (MPC) framework \cite{camacho2013mpc-book,rawlings2020mpc-book}, in which the state and input constraints can be incorporated into the optimization problem for computing the control signals. Another route is to augment a well-designed nominal controller, that already achieves high performance for small signals, with constraint handling capability that protects the system against constraint violations in transients for large signals.  The second route is attractive to practitioners who are interested in preserving an existing/legacy controller or are concerned with the computational cost, tuning complexity, stability, robustness, certification issues, and/or other requirements satisfactorily addressed by the existing controller. The reference governor (RG) is an example of the second approach. As its name suggests, RG is an {\it add-on} scheme for enforcing pointwise-in-time state and control constraints by modifying the reference command to a well-designed closed-loop system. The RG acts like a pre-filter that, based on the current value of the {\it desired reference command} $r(t)$ and of the states (measured or estimated) $x(t)$, generates a {\it modified reference command} $v(t)$ which avoids constraint violations. Since its advent, variants of RGs have been proposed for both linear and nonlinear systems. 
See the survey paper \cite{garone2017reference-governor-survey} and references therein. While RG has been extensively studied for systems  for which exact dynamic models are available, the design of RG for uncertain systems, i.e., systems with  unknown parameters, state-dependent uncertainties, unmodelled dynamics and/or external disturbances, has been less addressed. 
\subsection{Related Work}\label{sec:sub-related-work}
\noindent {\bf Robust Approaches}: As mentioned in \cite{garone2017reference-governor-survey}, the RG can be straightforwardly modified to handle unmeasured set-bounded disturbances  by taking into account all possible realizations of the disturbances when determining the {\it maximal output admissible set}  \cite{kolmanovsky1998robustRG-disturbance-invariant-sets}. For uncertain systems, various robust or tube MPC schemes  have also been proposed  \cite{kerrigan2001robust-mpc,langson2004robust-mpc-tube,rakovic2005robust-mpc,mayne2006robust-mpc,mayne2011tube-mpc-nonlinear,kohler2020computationally-rmpc,lopez2019dynamic-tube-mpc} and summarized in \cite{kouvaritakis2015mpc-classical-book}, most of which consider parametric uncertainties and bounded disturbances with only a few exceptions (e.g.,  \cite{kohler2020computationally-rmpc,lopez2019dynamic-tube-mpc}) that consider state-dependent uncertainties. However, robust approaches often lead to conservative results when the disturbances are large.

\noindent {\bf Adaptive and uncertainty compensation based approaches} could potentially achieve less conservative results than robust approaches. Along these lines, various adaptive MPC strategies with performance guarantees have been proposed for systems with unknown parameters \cite{zhang2020adaptive-mpc-parametric,adetola2009adaptive-mpc-nonlinear,pereida2021robust-adaptive-mpc} and state-dependent uncertainties \cite{wang2017adaptiveMPC, bujarbaruah2020semi-adaptive-mpc}. 
In particular,  \cite{pereida2021robust-adaptive-mpc} uses an \lonew adaptive controller \cite{naira2010l1book} to compensate for matched parametric uncertainties so that the uncertain plant behaves close to a nominal model, and uses robust MPC to handle the error between the combined system, consisting of the uncertain plant and the adaptive controller, and the nominal model. 
To the best of our knowledge, all of the existing adaptive MPC solutions, including \cite{pereida2021robust-adaptive-mpc} 
involve propagation of uncertainties along a prediction horizon. Reference \cite{poloni2014disturbance-constraint-rg} merged a Lyapunov function based RG  with a disturbance cancelling controller based on an input observer to achieve non-conservative treatment of uncertainties. Unfortunately, a bound on the rate of change of the disturbance is needed for the design, which is often difficult to obtain when the disturbance is dependent on states. Additionally, input constraints were not considered in that work.\\
\noindent {\bf State-dependent uncertainties (SDUs):} 
If a system is affected by SDUs, and the states are limited to a compact set, it is always possible to bound the SDU with a worst-case value and to apply the robust approaches (e.g., robust or tube MPC \cite{kerrigan2001robust-mpc,langson2004robust-mpc-tube,rakovic2005robust-mpc}) developed for bounded disturbances. However, by accounting for the state dependence, one can improve performance and reduce conservatism, as demonstrated in robust MPC solutions in \cite{pin2009robust,lopez2019dynamic-tube-mpc}. Adaptive MPC solutions which account for SDUs have been proposed in  \cite{wang2017adaptiveMPC,bujarbaruah2020semi-adaptive-mpc}. These solutions essentially rely on computing the uncertainty or state bounds along the prediction horizon using the Lipschitz proprieties of SDUs, and solving a robust MPC problem, using the computed bounds.  



\subsection{Contributions}
The contributions of this paper are as follows. Firstly, for constrained control under uncertainties, we develop an \loneRG~framework for linear systems with matched nonlinear uncertainties that could depend on both time and states, and with both input and state constraints. Our adaptive robust RG framework~leverages an \lonew adaptive controller (\loneAC) to estimate and compensate for the uncertainties, and to guarantee {\it uniform bounds} on the error between actual states and inputs and those of a nominal (i.e., uncertainty-free) closed-loop system. These uniform bounds characterize tubes in which actual states and control inputs are guaranteed to stay despite the uncertainties. A reference governor designed for the {nominal} system with { constraints} tightened using these uniform bounds guarantees robust constraint satisfaction in the presence of uncertainties.  
Additionally, we show that these {\it uniform bounds on state and input errors}, and thus the {\it conservatism induced by constraint tightening} 
can be {\it arbitrarily reduced} in theory by tuning the filter bandwidth and estimation sample time parameters of the \loneAC. Secondly, as a separate contribution to \lonew adaptive control, we propose a novel scaling technique that allows deriving {\it separate} tight uniform bounds on {\it each} state and adaptive control input, as opposed to a {\it single} bound for all states, or adaptive control inputs in existing \loneAC~solutions \cite{naira2010l1book}. The ability to provide such separate tight bounds makes an \loneAC~particularly attractive to be integrated with an RG for simultaneous constraint enforcement and improved trajectory tracking. Thirdly, we validate the efficacy of the proposed \loneRG~framework on a flight control example and we compare it with both baseline and robust RG solutions in simulations. 

Compared to existing literature, in particular, robust/adaptive MPC, \loneRG~has the following novel aspects:
\begin{itemize}[topsep=0pt]
    \item Thanks to the  uncertainty compensation and transient performance guarantees available for the \loneAC, \loneRG, (under suitable assumptions,) does {\it not require uncertainty propagation} along the prediction horizon. This uncertainty propagation is generally required in all existing robust and adaptive MPC approaches, 
    and incurs conservatism, which is avoided by \loneRG.  
    \item \loneRG~{\it simultaneously improves tracking performance and enforces the constraints}, while existing robust/disturbance-observer-based RG or robust/adaptive MPC solutions except a few such as \cite{pereida2021robust-adaptive-mpc,poloni2014disturbance-constraint-rg}, focus on constraint satisfaction only.   
    \item 
    Within \loneRG, the uniform bounds on the state and input errors (used for constraint tightening) and thus the conservatism induced by constraint tightening can be made arbitrarily small, which cannot be achieved by existing methods. 
    \item \loneRG~is able to handle  uncertainties that can nonlinearly depend on both time and states. Such a case has not been considered by previous adaptive MPC solutions that are based on uncertainty compensation. For instance, the solution in  \cite{pereida2021robust-adaptive-mpc}, which also leverages an \loneAC, only treats parametric uncertainties and state constraints.
\end{itemize}
The paper is structured as follows. \cref{sec:prob-statement} formally states the problem. \cref{sec:overview-preliminary} provides an overview of the proposed solution and discusses preliminaries related to RG and \loneAC~design. \cref{sec:l1ac-separate-bnds} introduces a scaling technique to derive separate and tight performance bounds for an \loneAC, while  
\cref{sec:l1rg} presents synthesis and performance analysis of the proposed \loneRG~framework. \cref{sec:sim} includes validation of the proposed \loneRG~framework on a flight control problem in simulations. 

{\it Notations}: Let $\mathbb{R}$, $\mathbb{R}_+$ and $\mbZ_+$  denote the set of real, non-negative real, and non-negative integer numbers, respectively.  $\mathbb{R}^n$ and  $\mathbb{R}^{m\times n}$  denote the $n$-dimensional real vector space and the set of real $m$ by $n$ matrices, respectively. $\mbZ_i$ and $\mbZ_1^n$ denote the integer sets $\{i, i+1, \cdots\}$ and $\{1, 2,\cdots,n\}$, respectively.
$I_n$  denotes an  identity matrix of size $n$, and $0$ is a zero matrix of a compatible dimension.
$\norm{\cdot}$ and  $\norm{\cdot}_\infty$ denote the $2$-norm and $\infty$-norm of a vector or a matrix, respectively. 
 The $\mathcal{L}_\infty$- and truncated $\mathcal{L}_\infty$-norm of a function $x:\mathbb{R}_+ \rightarrow\mathbb{R}^n$ are defined as $\norm{x}_{\mathcal{L}_\infty}\triangleq \sup_{t\geq 0}\infnorm{x(t)}$ and $\linfnormtruc{x}{T}\triangleq \sup_{0\leq t\leq T}\infnorm{x(t)}$, respectively. The Laplace transform of a function $x(t)$ is denoted by $x(s)\triangleq\mathfrak{L}[x(t)]$.
 For a vector $x$, $x_i$ denotes the $i$th element of $x$. Given a positive scalar $
 \rho$, $\Omega(\rho)\trieq \{z\in \mbR^n: \infnorm{z}\leq \rho \}$ denotes a high dimensional ball set of radius $\rho$ and  centered at the origin, while its dimension $n$ can be deduced from the context. For a high-dimensional set $\mcX$, $\textup{int}(\mcX)$ denotes the interior of $\mcX$ and $\mcX_i$ denotes the projection of $\mcX$ onto the $i$th coordinate. For given sets $\mcX,\mcY\subset \mbR^n$, $\mcX \oplus \mcY \trieq \{ x+y: x\in \mcX, y\in \mcY \}$ is the Minkowski set sum and $\mcX\ominus \mcY \trieq \{z: z+y\in \mcX, \forall y\in \mcY \}$ is the Pontryagin set difference. 

\section{Problem statement}\label{sec:prob-statement}
Consider an uncertain linear system represented by
\begin{equation}\label{eq:dynamics-uncertain-original}
\left\{ \begin{aligned}
  \dot x(t) &= Ax(t) + B(u(t) +  f(t,x(t))), \hfill \\
  y(t) &= Cx(t), \ x(0) = x_0,\\ 
\end{aligned}\right.    
\end{equation}
where $x(t)\in\mbR^n$, $u(t)\in \mbR^m$ and $y(t)\in\mbR^m$ are  the state, input and output vectors, respectively, $x_0\in\mbR^n$ is the initial state vector,  $f(t,x(t))\in \mbR^m$ denotes the uncertainty that can depend on both time and states, and $A,~B,$ and $C$ are matrices of compatible dimensions.
We want to design a control law for $u(t)$ such that the output vector $y(t)$ tracks a reference signal $r(t)$ while satisfying the specified state and control constraints:
\begin{equation}\label{eq:constraints}
    \begin{gathered}
  x(t) \in \mathcal{X},\quad u(t) \in \mathcal{U}, \quad \forall t\geq 0,\hfill
\end{gathered} 
\end{equation}
where $\mcX\subset \mbR^n$ and $\mcU \subset\mbR^m$ are pre-specified convex and compact sets with $0$ in the interior. Note that \cref{eq:constraints} can also represent constraints on some of the states and/or inputs. 

Suppose a baseline controller is available and  achieves desired performance for the nominal (i.e., uncertainty-free) system given a small desired reference command $r(t)$ to track. 
To enforce state and input constraints  \cref{eq:constraints} for the nominal system with larger signals,  one can simply leverage the conventional RG, which will generate a modified reference command $v(t)$ based on  $r(t)$. In such a case, the baseline controller can be selected as
\begin{equation}\label{eq:baseline-control-law}
   u_\textup{b}(t) = K_xx(t) + K_vv(t), 
\end{equation}
where $K_x$ and $K_v$ are feedback and feedforward gains. For both improved tracking performance and constraint enforcement in the presence of the uncertainty $f(t,x)$, we leverage an \loneAC. To this end, we adopt a compositional control law:
\begin{equation}\label{eq:total-control-law}
    u(t)=u_\textup{b}(t) +u_\at(t),
\end{equation}
where $u_\at(t)$ is the vector of the adaptive control inputs designed to cancel $f(t,x)$. With \cref{eq:baseline-control-law}, the uncertain system \cref{eq:dynamics-uncertain-original} can be rewritten as
\begin{equation}\label{eq:dynamics-uncertain}
\left\{
\begin{aligned}
  \dot x(t) &= {A_m}x(t) + {B_v}v(t) + B(u_\at(t) + f(t,x(t))), \\ 
  y(t) & = Cx(t), \ x(0) =x_0, 
\end{aligned}\right. 
\end{equation}
where ${A_m} \trieq A + B{K_x}$ is a Hurwitz matrix and ${B_v} \trieq B{K_v}$. 

The problem to be tackled can be stated as follows: {\it Given an uncertain system \cref{eq:dynamics-uncertain-original}, a baseline controller \cref{eq:baseline-control-law} and a desired reference signal $r(t)$,  design a RG (for determining $v(t)$) and the \loneAC~for $u_\at(t)$
 such that the output signal $y(t)$ tracks $r(t)$ whenever possible, 
while the state and input constraints \cref{eq:constraints} are satisfied.} We make the following assumption on the uncertainty. 
\begin{assumption}\label{assump:lipschitz-bnd-fi}
Given a compact set $\mcZ$, there exist known positive constants $L_{f_j,\mcZ}$, $l_{f_j,\mcZ}$ and $b_{f_j,\mcZ}$ ($j\in\Zm$) such that for any $x,z \in \mcZ$ and $t,\tau\geq 0$, the following inequalities hold for each $j\in\Zm$:
\begin{subequations}\label{eq:lipschitz-cond-and-bnd-fi}
\begin{align}
\abs{f_j(t,x) - f_j(\tau,z)}  & \le L_{f_j,\mcZ}\infnorm{x - z} + l_{f_j,\mcZ}\abs{t-\tau} , \label{eq:lipschitz-cond-fi}\\
  \abs {f_j(t,x)}  &\le b_{f_j,\mcZ}, \label{eq:bnd-fi} 
\end{align}
\end{subequations}
where $f_j(t,x)$ denotes the $i$th element of $f(t,x)$.
\end{assumption}

\begin{remark}
 \cref{assump:lipschitz-bnd-fi} indicates that in the compact set $\mcZ$,  $f_j(t,x)$ is {Lipschitz} continuous with respect to $x$  with a known Lipschitz constant $L_{f_j,\mcZ}$, has a bounded rate of variation $l_{f_j,\mcZ}$ with respect to $t$,  and is uniformly bounded by a constant $b_{f_j,\mcZ}$. 
\end{remark}
In fact, given the local Lipschitz constant $L_{f_j,\mcZ}$ and the bounded rate of variation $l_{f_j,\mcZ}$, a uniform bound for $f_j(t,x)$ in $\mcZ$ can always be derived if the bound on $f_j(t,x^\ast)$ for an arbitrary $x^\ast$ in $\mcZ$ and any $t\geq 0$ is known. For instance, assuming we know  $\abs{f_j(t,0)}\leq b^i_0$,  from \cref{eq:lipschitz-cond-fi}, we have that $\abs{f_j(t,x)-f_j(t,0)}\leq L_{f_j,\mcZ} \infnorm{x}$, which immediately leads to
    $\abs{f_j(t,x)}\leq b_0^i + L_{f_j,\mcZ}\max_{x\in\mcX} \infnorm{x}$,
for any $x\in \mcZ$ and $t\geq 0$. In practice, some prior knowledge about the uncertainty (e.g., $f_j$ depends on only a few instead of all states)  may be leveraged to obtain a tighter bound than the preceding one, 
derived using the Lipschitz continuity and triangular inequalities. This motivates the assumption on the uniform bound in \cref{eq:bnd-fi}. 

Under the conditions in \cref{assump:lipschitz-bnd-fi}, we immediately obtain that for any $x,z \in \mcZ$ and $t,\tau\geq 0$,
\begin{subequations}\label{eq:lipschitz-cond-bnd-f}
\begin{align}
\infnorm{f(t,x) - f(\tau,z)}  & \le{L_{f,\mcZ}}\infnorm{x - z} + l_{f,\mcZ}\abs{t-\tau} , \label{eq:lipschitz-cond-f}\\
  \infnorm {f(t,x)}  &\le {b_{f,\mcZ}},  \label{eq:bnd-f}
\end{align}
\end{subequations}
where 
\begin{equation}\label{eq:Lf-lf-bf-defn}
        L_{f,\mcZ} = \max_{j\in \Zm} L_{f_j,\mcZ}, \quad  l_{f,\mcZ} = \max_{j\in \Zm} l_{f_j,\mcZ}, \quad b_{f,\mcZ} = \max_{j\in \Zm} b_{f_j,\mcZ}.
\end{equation}
\begin{remark}
Our choice of making assumptions on $f_j(t,x)$ instead of on $f(t,x)$ as in \cref{eq:Lf-lf-bf-defn} facilitates deriving an  {\it individual} bound on each state 
and on each adaptive input (see \cref{sec:l1ac-separate-bnds} for details). 
\end{remark}

\begin{remark}
In principle, given the uniform bound on $f(t,x)$ in \cref{eq:bnd-f} obtained from \cref{assump:lipschitz-bnd-fi},  constraints can be enforced via robust RG  or robust MPC approaches that handle bounded disturbances, as discussed in \cref{sec:sub-related-work}.  However, when this bound is large, robust approaches can yield overly conservative performance. 
\end{remark}
 
 \section{Overview and Preliminaries}\label{sec:overview-preliminary}
 In this section, we first present an overview of the proposed \loneRG~framework and then introduce some preliminary results that provides a foundation for the \loneRG~framework. 
 
 \subsection{Overview of the \loneRG~Framework}\label{sec:sub-overview-l1rg}
Figure~\ref{fig:l1rg} depicts the proposed \loneRG~framework. 
As shown in \cref{fig:l1rg}, \loneRG~is comprised of two integrated components. The first one is an \loneAC~designed to compensate for the uncertainty $f(t,x)$ and to guarantee {\it uniform bounds} on the errors between actual states and inputs,  and those
of the nominal closed-loop system:
\begin{subequations}\label{eq:nominal-cl-system-w-un}
\begin{align}
  \dot x_\nt(t) &= {A_m}\xn(t) + {B_v}v(t), \ \xn(0) =x_0,\label{eq:nominal-cl-system} \\
     u_\nt(t)& =K_x \xn(t) + K_v v(t).\label{eq:un-defn}
\end{align}
\end{subequations}
 The second component is a RG designed for the {\it nominal system} \cref{eq:nominal-cl-system} with {\it tightened constraints} computed using the uniform bounds guaranteed by the \loneAC. 
  \begin{figure}[h]
     \centering
     \includegraphics[width = 0.6\columnwidth]{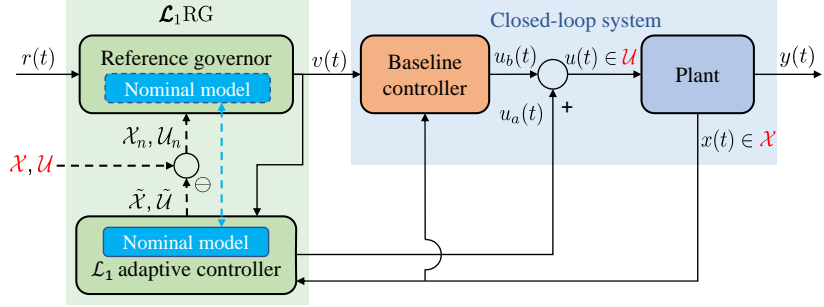}
     \caption{Diagram of the proposed \loneRG~framework}
     \label{fig:l1rg}
     \vspace{-3mm}
 \end{figure}
 More formally, we will design the \loneAC~to ensure 
\begin{equation}\label{eq:x-xn-in-X-u-un-in-U}
    \begin{aligned}
    x(t)-\xn(t) \in \tilde \mcX,  
  \quad  u(t)- \un(t) \in  \tilde \mcU, \quad \forall t\geq 0,
    \end{aligned}
\end{equation} 
where $x(t)$ and $u(t)$ are the vectors of states and of the total control inputs of the closed-loop system \cref{eq:dynamics-uncertain}: 
\begin{align}\label{eq:u-un-defn}
    u(t) - u_\nt(t) = K_x(x(t)-\xn(t)) + u_\at(t),
\end{align}
where $u(t)$ is given by \cref{eq:total-control-law} and $\un(t)$ by \cref{eq:un-defn},
and $\tilde \mcX$ and $\tilde \mcU$ are some pre-computed hyperrectangular sets  dependent on the properties of $f(t,x)$ and of the \loneAC. The details will be given in \cref{them:xi-uai-bnd} in  \cref{sec:sub-l1AC}. 
Define 
\begin{equation}\label{eq:Xn-Un-defn}
    \mcX_\nt \trieq \mcX \ominus \tilde \mcX, \quad \mcU _\nt \trieq \mcU \ominus \tilde \mcU. 
\end{equation}
Then for robust constraint enforcement, one just needs to design a RG for the nominal system \cref{eq:nominal-cl-system-w-un}
 with tightened constraints given by
\begin{equation}\label{eq:nominal-system-csts}
\xn(t) \in \mcX_\nt,\ u_\nt(t)\in \mcU _\nt, \quad \forall t\geq 0.
\end{equation}

\subsection{Reference Governor Design for a Nominal System}\label{sec:sub-rg-nominal}
 We now introduce the RG for the nominal system \cref{eq:nominal-cl-system-w-un} to enforce the constraints \cref{eq:nominal-system-csts}. We use the discrete-time RG approach of  \cite{garone2017reference-governor-survey} that uses a discrete-time model: 
  \begin{equation}\label{eq:nominal-system-discrete-w-un}
 \left\{
 \begin{aligned}
     \mathrm\xn(k+1)  &= \hat A_m \mathrm\xn(k) + \hat B_v  \mathrm v(k), \  \mathrm x(0) =x_0, \\
    \mathrm u_\nt(k) & =K_x \mathrm\xn(k) + K_v \mathrm v(k),
 \end{aligned}\right.
\end{equation}
where $\mathrm\xn(k)$, $\mathrm v(k)$ and $\mathrm \un(k)$ denotes the vectors of states, of reference command inputs, and of nominal control inputs, respectively, and $\hat A_m$ and $\hat B_v$ are computed from $A_m$ and $B_v$ in \cref{eq:nominal-cl-system-w-un} assuming a sampling time,  $T_d$.  When doing the discretization, we ensure that the discrete-time system \cref{eq:nominal-system-discrete-w-un} has the same states as the continuous-time system at all sampling instants. This can be achieved by using the zero-order hold discretization, since 
\begin{equation}\label{eq:v-t-kTd-relation}
    v(t)=\mathrm v (kT_d), \quad \forall t\in[kT_d,(k+1)T_d),
\end{equation}
which indicates that $v(t)$ is piecewise constant. 
 The constraints \cref{eq:nominal-system-csts} are imposed in discrete-time as 
 \begin{equation}\label{eq:nominal-system-discrete-csts}
  \mathrm \xn(k) \in \hat \mcX_\nt, \ 
   \mathrm\un(k)\in \hat \mcU _\nt, \quad \forall k\in\mbZ_+,
\end{equation}
where $\hat \mcX_\nt$ and $\hat \mcU_\nt$ 
are tightened versions of $\mcX_\nt $ and $\mcU_\nt $, respectively, introduced to avoid inter-sample constraint violations, and are defined by 
\begin{subequations}\label{eq:Xn-hat-Un-hat-defn}
\begin{align}
   \hat \mcX_\nt & \trieq \mcX_\nt \ominus \left\{ z\in\mbR^n: \infnorm{z}\leq \nu(T_d) \right\}, \\
   \hat \mcU_\nt & \trieq  \mcU_n \ominus \left\{ z\in\mbR^m: \infnorm{z}\leq 
\infnorm{K_x}\nu(T_d)\right\},
\end{align}
\end{subequations}
while  
\begin{equation}\label{eq:nu-Td-defn}
  \hspace{-2mm}  \nu(T_d) \!\trieq \!\max_{\tau\in[0,T_d]} \!\infnorm{e^{A_m \tau}\!\!-\!I_n}\!\max_{x\in \mcX_n,v\in \mathcal V}\!\infnorm{x\!+\!A_m^{-1}B_v v}\!, 
\end{equation}
with $\mathcal V$ denoting the set of all possible reference commands output by the RG.  

The following lemma formally guarantees that no inter-sample constraint violations will happen for the continuous-time system \cref{eq:nominal-cl-system-w-un} when the constraints for the discrete-time system \cref{eq:constraints-nom-ssform-discrete} are satisfied at all sampling instants. 
\begin{lemma}\label{lem:nominal-system-inter-sample-behavior}
Consider the continuous-time system \cref{eq:nominal-cl-system-w-un} and its discrete-time counterpart \cref{eq:nominal-system-discrete-w-un} that has the same states as \cref{eq:nominal-cl-system-w-un} at all sampling instants. If for the discrete-time system \cref{eq:nominal-system-discrete-w-un},
\begin{equation}\label{eq:xn-un-satisfy-tightened-csts}
   \mathrm  \xn(k)\in\hat \mcX_\nt,\  \mathrm \un(k)\in\hat \mcU_\nt,\quad k\in\mbZ_+,
\end{equation}
 with $\hat \mcX_\nt$ and $\hat \mcU_\nt$ defined in \cref{eq:Xn-hat-Un-hat-defn}, then \cref{eq:nominal-system-csts} holds for the continuous-time system \cref{eq:nominal-cl-system-w-un}. \end{lemma}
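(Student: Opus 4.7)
The plan is to bound the inter-sample drift of $\xn(t)$ and $\un(t)$ away from their values at the most recent sampling instant, and then invoke the tightening that defines $\hat\mcX_\nt$ and $\hat\mcU_\nt$ via Pontryagin difference. Fix any $t \ge 0$ and write $t = kT_d + \tau$ with $\tau \in [0, T_d)$. Because $v(t)$ is piecewise constant on $[kT_d,(k+1)T_d)$ by \cref{eq:v-t-kTd-relation}, integrating \cref{eq:nominal-cl-system} on $[kT_d, t]$ and using that $A_m$ is Hurwitz (hence invertible) yields the closed form
\begin{equation*}
\xn(t) = e^{A_m \tau}\xn(kT_d) + A_m^{-1}\!\left(e^{A_m \tau} - I_n\right) B_v \mathrm v(kT_d).
\end{equation*}
A small algebraic rearrangement gives the key identity
\begin{equation*}
\xn(t) - \xn(kT_d) = \left(e^{A_m \tau} - I_n\right)\!\left(\xn(kT_d) + A_m^{-1} B_v\, \mathrm v(kT_d)\right).
\end{equation*}

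Next, I take $\infty$-norms on both sides. By hypothesis, $\mathrm\xn(kT_d) \in \hat\mcX_\nt \subseteq \mcX_\nt$, and by construction of the RG, $\mathrm v(kT_d) \in \mathcal V$. Hence the second factor is bounded by $\max_{x\in\mcX_\nt, v\in\mathcal V}\infnorm{x+A_m^{-1}B_v v}$, and the first by $\max_{\tau\in[0,T_d]}\infnorm{e^{A_m\tau}-I_n}$, so the definition \cref{eq:nu-Td-defn} gives $\infnorm{\xn(t)-\xn(kT_d)}\le \nu(T_d)$. Since $\mathrm\xn(kT_d) \in \hat\mcX_\nt = \mcX_\nt \ominus \{z: \infnorm{z}\le \nu(T_d)\}$, the Pontryagin difference property $\hat\mcX_\nt \oplus \{z:\infnorm{z}\le \nu(T_d)\} \subseteq \mcX_\nt$ yields $\xn(t)\in \mcX_\nt$.

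For the input, observe that \cref{eq:un-defn} together with $v(t)=\mathrm v(kT_d)$ on $[kT_d,(k+1)T_d)$ gives $\un(t) - \mathrm\un(kT_d) = K_x(\xn(t)-\xn(kT_d))$, so $\infnorm{\un(t)-\mathrm\un(kT_d)}\le \infnorm{K_x}\,\nu(T_d)$. Because $\mathrm\un(kT_d)\in \hat\mcU_\nt = \mcU_\nt \ominus\{z:\infnorm{z}\le \infnorm{K_x}\nu(T_d)\}$, the same Pontryagin argument delivers $\un(t)\in\mcU_\nt$. As $t\ge 0$ was arbitrary, \cref{eq:nominal-system-csts} holds and the proof is complete.

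The main subtlety is the identity $\xn(t) - \xn(kT_d) = (e^{A_m\tau}-I_n)(\xn(kT_d)+A_m^{-1}B_v\mathrm v(kT_d))$, which makes the bound on inter-sample deviation factorize exactly into the two maxima defining $\nu(T_d)$; without this factorization one would be forced to use a looser, separate bound on $\infnorm{\xn}$ and on $\infnorm{\mathrm v}$. The remaining work is essentially bookkeeping with the Pontryagin difference once this identity is in hand.
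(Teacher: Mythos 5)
Your proof is correct and follows essentially the same route as the paper's: the zero-order-hold closed form on $[kT_d,(k+1)T_d)$, the factorized identity $\xn(t)-\xn(kT_d)=(e^{A_m\tau}-I_n)(\xn(kT_d)+A_m^{-1}B_v\mathrm v(kT_d))$ bounded by $\nu(T_d)$, and the Pontryagin-difference step to land back in $\mcX_\nt$ and $\mcU_\nt$. No gaps.
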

 \begin{proof}
 See \cref{sec:sub-proof-for-lem:nominal-system-inter-sample-behavior}.
 \end{proof}
 \begin{remark}\label{rem:Xn-hat-Xn-relation-w-Td}
 From \cref{eq:nu-Td-defn,eq:Xn-hat-Un-hat-defn}, we can see that $\hat \mcX_\nt$ and $\hat \mcU_\nt$ are close to $\mcX_\nt$ and $ \mcU_\nt$, respectively, when $T_d$ is small. 
 For practical implementation, inter-sample constraint violations may not be a big concern when $T_d$ is small. Under such case, we can simply set $\hat \mcX_\nt = \mcX_\nt$ and $\hat \mcU_\nt = \mcU_\nt$.  
 \end{remark}
Define
 \begin{equation}\label{eq:nominal-cl-system-output}
\hspace{-2mm} \mathrm y_{\nt}^c(k)  \! \trieq \!
  \begin{bmatrix}
  \mathrm\xn(k) \\
  K_x \mathrm \xn(k)\! + \!K_v \mathrm v(k) 
  \end{bmatrix} \!= \!\underbrace{\begin{bmatrix}
  I_n \\ K_x
  \end{bmatrix}}_{ \trieq\ C_c }
\mathrm\xn(k) \!+ \! \underbrace{\begin{bmatrix}
  0 \\ K_v
  \end{bmatrix}}_{ \trieq\ D_c } \mathrm v(k).
\end{equation}
Then, the constraints \cref{eq:nominal-system-discrete-csts} can be rewritten as 
  \begin{equation}\label{eq:constraints-nom-ssform-discrete}
    \mathrm y_{\nt}^c(k) \in  \mcY_\nt \trieq \hat \mcX_\nt \times \hat \mcU_\nt, \quad \forall k\in\mbZ_+,
\end{equation}
where  $\times$ denotes the cross product. 
\begin{remark}
In case there are no constraints on certain states and/or inputs, one can remove the rows of $C_c$ and $D_c$ defined in \cref{eq:nominal-cl-system-output} corresponding to these states and/or inputs,  and adjust the sets $\hat \mcX_\nt$, $\hat \mcU_\nt$ and $\mcY_\nt$ accordingly. 
\end{remark}

Similar to most RG schemes, the RG scheme we adopt here computes at each time instant a command $\mathrm v(k)$ such that, if it is {\it constantly} applied from the time instant $k$ onward, the ensuing output will always satisfy the constraints. More formally, we define the {\it maximal output admissible set} $O_\infty$ \cite{gilbert1991linear-constraints-maximal} as the set of all states $\mathrm \xn$ and inputs $\mathrm v$, such that the predicted response from the initial state $\mathrm\xn$ and with a constant input $\mathrm v$ satisfies the constraints \cref{eq:constraints-nom-ssform-discrete}, i.e.,
\begin{equation}\label{eq:Oinf-defn}
    O_\infty \trieq \{ (\mathrm v,\mathrm \xn): \hat {\mathrm y}_{\nt}^c(k|\mathrm v,\mathrm\xn) \in \mcY_n, \ \forall k\in \mbZ_+  \}, 
\end{equation}
where for system \cref{eq:nominal-system-discrete-w-un} the output prediction $\hat{\mathrm y}_{\nt}^c(k|\mathrm\xn,\mathrm v)$ is given by
\begin{align}
\hspace{-2.5mm}    \hat{\mathrm y}_{\nt}^c(k|\mathrm v,\mathrm\xn)  \!= ~C_c \hat A_m^k \mathrm\xn\! + \!C_c \sum_{j=1}^k \hat A_m^{j-1} \hat B_v \mathrm v \!+ \!\!D_c \mathrm v =   C_c \hat A_m^k \mathrm\xn  \!+\! C_c (I_n \!-\! \!\hat A_m)^{\!-1}(I_n\!-\!\!\hat A_m^k)\hat B_v \mathrm v \!+\!\! D_c \mathrm v.
\end{align}
Define $\tilde O_\infty$ as a slightly tightened version of $O_\infty$ obtained by constraining the command $\mathrm v$ so that the associated steady-state output $\bar { \mathrm y}_{\nt}^c= (D_c+C_c(I_n\!-\!\hat A_m)^{-1} \hat B_v)\mathrm v$ satisfies constraints with a nonzero (typically small) margin $\epsilon>0$, i.e., 
\begin{equation}
    \tilOinf = \Oinf \cap O^\epsilon,
\end{equation}
where 
    $O^\epsilon \trieq \{(\mathrm v,\mathrm\xn): \bar { \mathrm y}_{\nt}^c\in (1-\epsilon) \mcY_\nt \}.$
Clearly, $\tilOinf$ can be made arbitrarily close to $\Oinf$ by decreasing $\epsilon$. 
Based on the currently available state $\mathrm\xn(k)$ at an instant $k$, the RG computes $\mathrm v(k)$ so that
\begin{equation}
    (\mathrm v(k),\mathrm\xn(k))\in \tilde O_\infty.
\end{equation}
It is proven in \cite{gilbert1991linear-constraints-maximal} that if $\hat A_m$ is Schur, $(\hat A_m, C_c)$ is observable,  $\mcY_\nt$ is compact with $0$ in the interior, and $\epsilon>0$ is sufficiently small,  then the set $\tilOinf$ is {\it finitely determined}, i.e., there exists a finite index $k^\star$ such that 
\begin{equation}\label{eq:tilOinf-defn}
\tilOinf \! = \tilde O_{k^\star} = \!    \{ (\mathrm v,\mathrm\xn)\!: \hat { \mathrm y}_{\nt}^c(k|\mathrm v,\mathrm\xn) \! \in\! \mcY_n, \ k=0,1,\dots,k^\star \}  \cap O^\epsilon.
\end{equation}
Moreover,  $\tilOinf$ is {\it positively invariant}, which means that if $(\mathrm v(k),\mathrm\xn(k))\in \tilOinf$ and $\mathrm v(k)$ is applied to the system at time $k$, then $(\mathrm v(k),\mathrm\xn(k+1))\in \tilOinf$. Furthermore, if $\mcY_\nt$ is convex, then $\tilOinf$ is also convex. 
\begin{remark}
The process of computing $k^\star$ involves computing sets $\tilde O_k$ for $k=1,2,\dots$, and checking the condition $\tilde O_k = \tilde O_{k+1}$; $k^\star$ is the minimum $k$ for which this condition holds. 
\end{remark}

The proposed \loneRG~framework can leverage most of existing RG schemes developed for uncertainty-free systems. As an illustration and demonstration in \cref{sec:sim}, we choose the scalar RG introduced in \cite{bemporad1995nonlinear-rg,gilbert1995discrete-rg}. The scalar RG computes at each time instant $k$ a command $\mathrm v(k)$ which is the best approximation of the desired set-point $\mathrm r(k)$ along the line segment connecting $\mathrm v(k-1)$ and $\mathrm r(k)$ that ensures $(\mathrm v(k),\mathrm\xn(k))\in \tilOinf$. More specifically, the scalar RG solves at each discrete time $k$, the following optimization problem:
\begin{subequations}\label{eq:opt-prob-rg}
\begin{align}
    \kappa(k) =  \max_{\kappa\in[0,1]} &\kappa \\
    \textup{s.t. } &  \mathrm v =  \mathrm v(k-1)+\kappa(\mathrm r(k)-\mathrm r(k-1)),\\
    & (\mathrm v,\mathrm\xn(k))\in \tilOinf,
\end{align}
\end{subequations}
where $\kappa(k)$ is a scalar adjustable bandwidth parameter and $\mathrm v(k) = \mathrm v(k-1) + \kappa(k)(\mathrm r(k)-\mathrm v(k-1))$ is the modified reference command to be applied to the system. If there is no danger of constraint violation, $\kappa(k) =1$ and $\mathrm v(k)=\mathrm r(k)$ so that the RG does not interfere with the desired operation of the system. If $\mathrm v(k)=\mathrm r(k)$ would cause a constraint violation, the value of $\kappa(k)$ is decreased by the RG. In the extreme case, $\kappa(k) =0$, $\mathrm v(k)=\mathrm v(k-1)$, which means that the RG momentarily isolates the system from further variations of the reference command for constraint enforcement. Due to the positive invariance of $\tilOinf$, $\mathrm v(k)=\mathrm v(k-1)$ always satisfies the constraints, which ensures {\it recursive feasibility} under the condition that at $t=0$ a command $\mathrm v(0)$  is known such that $ \left(\mathrm v(0),\mathrm\xn(0)\right)\in \tilOinf$. Response properties of the scalar RG, including conditions for the {\it finite-time convergence} of $\mathrm v(k)$ to $\mathrm r(k)$ are detailed in \cite{gilbert1995discrete-rg}. 

\subsection{\lonew Adaptive Control Design and Uniform Performance Bounds}\label{sec:sub-l1AC}
We now present an \loneAC~that guarantees the bounds in \eqref{eq:x-xn-in-X-u-un-in-U}, {\it without} considering the state and control constraints in \cref{eq:constraints}. 
We first recall some basic definitions and facts from control theory, and introduce some definitions and lemmas.


\begin{definition}\label{definition:MIMO-L1} \cite[Section~III.F]{Scherer97Multi}
For a stable proper MIMO system $\mathcal H(s)$ with input $\mru(t)\in \mbR^m$ and output $\mry(t)\in \mbR^p$, its \lonew norm is defined as
\begin{equation}\label{eq:l1-norm-defn}
\lonenorm{\mathcal{H}(s)} \trieq \sup_{\mrx(0)=0,\linfnorm{\mru}\leq 1} {\linfnorm{\mry}}.
\vspace{-2mm}
\end{equation}
\end{definition}The following lemma follows directly from Definition~\ref{definition:MIMO-L1}. 
\begin{lemma}\label{lem:L1-Linf-relation}
For a stable proper MIMO system $\mathcal H(s)$ with states $\mrx(t)\in \mbR^n $, inputs $\mru(t)\in \mbR^m$ and outputs $\mry(t)\in \mbR^p$, under zero initial states, i.e., $\mrx(0) =0$, we have 
$\linfnormtruc{\mry}{\tau}\leq \lonenorm{\mathcal H(s)}\linfnormtruc{\mru}{\tau}$, for any $\tau\geq 0$. Furthermore, for any matrix $\Tau\in \mbR^{q\times p}$,  we have $\linfnorm{\Tau \mathcal H(s)}\leq \infnorm{\Tau}\linfnorm{\mathcal H(s)}$. 
\end{lemma}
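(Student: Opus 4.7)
The plan is to establish the two claims by (i) a standard causality plus scaling argument for the truncated-norm bound, and (ii) a direct application of the induced matrix $\infty$-norm to the output for the multiplication-by-matrix bound.

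For the first claim, I would fix an arbitrary $\tau \geq 0$ and, without loss of generality, assume $\linfnormtruc{\mru}{\tau} > 0$ (otherwise both sides are zero by linearity of $\mathcal{H}(s)$ with zero initial state). I would then construct a surrogate input $\bar{\mru}(t)$ that coincides with $\mru(t)$ on $[0,\tau]$ and is extended to $t > \tau$ in a way that preserves the $\linfnorm$: for example, set $\bar{\mru}(t) = \mru(t)$ for $t \in [0,\tau]$ and $\bar{\mru}(t) = 0$ for $t > \tau$, so that $\linfnorm{\bar{\mru}} = \linfnormtruc{\mru}{\tau}$. Let $\bar{\mry}(t)$ be the response of $\mathcal{H}(s)$ to $\bar{\mru}(t)$ from $\mrx(0) = 0$. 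Since $\mathcal{H}(s)$ is a causal (proper) LTI system, $\bar{\mry}(t) = \mry(t)$ for all $t \in [0,\tau]$, which gives $\linfnormtruc{\mry}{\tau} = \linfnormtruc{\bar{\mry}}{\tau} \leq \linfnorm{\bar{\mry}}$. Rescaling by $c \triangleq \linfnorm{\bar{\mru}} > 0$ so that $\bar{\mru}/c$ has unit $\linfnorm$ norm, and using linearity together with Definition~\ref{definition:MIMO-L1}, yields $\linfnorm{\bar{\mry}} \leq \lonenorm{\mathcal{H}(s)}\cdot c = \lonenorm{\mathcal{H}(s)} \linfnormtruc{\mru}{\tau}$, finishing this step.

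For the second claim, I would note that $\Tau \mathcal{H}(s)$ is still a stable proper MIMO system, whose response to any input $\mru$ with $\mrx(0)=0$ is simply $\Tau \mry(t)$, where $\mry(t)$ is the response of $\mathcal{H}(s)$. Since the matrix $\infty$-norm is submultiplicative with the vector $\infty$-norm, $\infnorm{\Tau \mry(t)} \leq \infnorm{\Tau}\,\infnorm{\mry(t)}$ for every $t \geq 0$, hence $\linfnorm{\Tau \mry} \leq \infnorm{\Tau}\,\linfnorm{\mry}$. Taking the supremum over all $\mru$ with $\linfnorm{\mru}\leq 1$ and invoking Definition~\ref{definition:MIMO-L1} delivers $\lonenorm{\Tau \mathcal{H}(s)} \leq \infnorm{\Tau}\,\lonenorm{\mathcal{H}(s)}$.

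Neither step is really hard; the only subtle point is the causality argument in part one, where one must be careful to pick an extension of the truncated input whose full $\linfnorm$ equals the truncated $\linfnorm$ of the original input, so that the scaling constant matches $\linfnormtruc{\mru}{\tau}$ exactly rather than some loose upper bound. I would keep the zero-extension (rather than constant extension), which is the cleanest choice and preserves both the $\linfnorm$ and the values on $[0,\tau]$. Everything else is just unpacking definitions.
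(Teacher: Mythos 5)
Your proof is correct, and it is exactly the standard causality-plus-homogeneity argument (zero-extend the input past $\tau$, use causality to identify the outputs on $[0,\tau]$, then rescale to a unit-norm input) together with submultiplicativity of the induced $\infty$-norm for the second claim. The paper does not write out a proof at all --- it simply asserts that the lemma "follows directly from Definition~1" --- so your write-up supplies precisely the details the authors leave implicit, including the correct handling of the degenerate case $\linfnormtruc{\mru}{\tau}=0$.
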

A unique feature of an $\mathcal{L}_1$AC is a low-pass filter $\mcC(s)$ (with DC gain $\mcC(0)=I_m$)
that decouples the estimation loop from the control loop, thereby allowing for arbitrarily fast adaptation without sacrificing the robustness \cite{naira2010l1book}. For simplicity, we can select $\mcC(s)$ to be a first-order transfer function matrix
\begin{equation}\label{eq:filter-defn}
\mcC(s) = \textup{diag}(\mcC_1(s), \dots, \mcC_m(s)), \ \mcC_j(s) \trieq \frac{k_f^j}{(s+ k_f^j)},\ j\in \Zm,
\end{equation}
where  $k^j_f$ ($j\in\Zm$) is the bandwidth of the filter for the $j$th input channel. 
We now introduce a few notations that will be used later:
\begin{subequations}\label{eq:Hxm-Hxv-Gxm-defn}
\begin{align}
\hspace{-2.5mm}    \mcH_{xm}(s) &\!\trieq\! (sI_n \!-\! A_m)^{-1}\! B,\      \mcH_{xv}(s) \!\trieq\! (sI_n \!-\! A_m)^{-1} \!B_v, \label{eq:Hxm-Hxv-defn}\\
\hspace{-2.5mm}      \mcG_{xm}(s) & \!\trieq\! \mcH_{xm}(s)(I_m-\mcC(s)),\ 
   \label{eq:Gxm-Gm-defn}
\end{align}
\end{subequations}
where $A_m, B_v$ correspond to system \cref{eq:nominal-cl-system-w-un} and $B$ to \cref{eq:dynamics-uncertain-original}.
Also, letting $\xin(t)$ be the state of the system 
    $\dot{x}_\textup{in}(t) = A_m \xin(t), \  \xin(0) = x_0,$
we have $\xin(s) \trieq (sI_n-A_m)^{-1} x_0$. Defining
    $\rho_\textup{in} \trieq \lonenorm{s(sI_n-A_m)^{-1}}\max_{x_0\in \mcX_0}\infnorm{x_0}$,
and further considering that $A_m$ is Hurwitz and $\mcX_0$ is compact, we have $\linfnorm{\xin}\leq \rhoin$ according to \cref{lem:L1-Linf-relation}. 
\subsubsection{\lonew adaptive control architecture}
For stability guarantees, the filter $\mcC(s)$ in \eqref{eq:filter-defn} needs to ensure that there exists  a positive constant $\rho_r$ and a (small) positive constant $\gamma_1$ such that
\begin{subequations}\label{eq:l1-stability-condition-w-extra-Lf}
\begin{align}
\lonenorm{\mcG_{xm}(s)}b_{f,\mcX_r}  < \rho_r - \lonenorm{\mcH_{xv}(s)}&\linfnorm{v} - \rho_\textup{in},\label{eq:l1-stability-condition} \\ 
\lonenorm{\mcG_{xm}(s)}L_{f,\mcX_a}&<1,\label{eq:l1-stability-condition-Lf}
\end{align}
\end{subequations}
where 
\begin{align} 
 \rho  & \trieq \rho_r + \gamma_1, \label{eq:rho-defn} \\ 
 \mcX_r & \trieq \Omega(\rho_r), \   \mcX_a  \trieq  \Omega(\rho). \label{eq:Xr-Xa-defn}
\end{align}
\begin{remark}
We will show in \cref{lem:ref-xr-ur-bnd} and \cref{them:x-xref-bnd} that $\rho_r$ and $\rho$ are actually uniform bounds on the states of a non-adaptive reference system (defined in \eqref{eq:ref-system}) and of the 
adaptive system, respectively. 
\end{remark}
\begin{remark}\label{rem:Gxm_linfnorm-stability-filter}
Note that $\lonenorm{\mcG_{xm}(s)}\rightarrow 0$, when the bandwidth of the filter $\mcC(s)$ goes to infinity, i.e.,  $k_f^j\rightarrow \infty$ for all $j\in \Zm$. Furthermore, $b_{f,\Omega(\rho_r)}$ can be bounded using the Lipschitz property \cref{eq:lipschitz-cond-f} of $f(t,x)$ in $\Omega(\rho_r)$, and $L_{f,\Omega(\rho)}$ is bounded given any $\rho>0$. Therefore, \cref{eq:l1-stability-condition-w-extra-Lf} can always be satisfied under  a sufficiently high bandwidth for $\mcC(s)$.
\end{remark}
 
A typical \loneAC~is  comprised of three elements, namely a state predictor, an adaptive law and a low-pass filtered control law. For the system \cref{eq:dynamics-uncertain}, the {\bf state predictor}  is defined by 
\begin{equation}\label{eq:state-predictor}
\dot {\hat x} (t)= A_m x(t) + B_v v(t) + B( u_\at (t)+ \hsigma_1(t)) + B^\perp \hsigma_2(t)  + A_e \tilx(t), \ \hat x(0) = x_0,
\end{equation}
where $\tilx(t) = \hat x(t) - x(t)$ is the prediction error, $A_e$ is a Hurwitz matrix, $B^\perp \in \mbR^{n\times (n-m)}$ is an arbitrary matrix satisfying $B^\perp B = 0$ and $\textup{rank}\left(\left[B\ B^\perp \right]\right)=n$, and  $\hsigma_1(t)$ and $\hsigma_2(t)$ are estimated matched and unmatched disturbances, respectively. The estimates $\hsigma_1(t)$ and $\hsigma_2(t)$ are updated by the following piecewise-constant {\bf adaptive law} (similar to that in \cite[Section~3.3]{naira2010l1book}):
\begin{equation}\label{eq:adaptive_law}\left\{
\begin{aligned}
&\begin{bmatrix}
      \hsigma_1(t) \\
      \hsigma_2(t)
\end{bmatrix}  &  &  \hspace{-4mm}=  
\begin{bmatrix}
      \hsigma_1(iT) \\
      \hsigma_2(iT)
\end{bmatrix}
    , \quad t\in [iT, (i+1)T), \\
& \begin{bmatrix}
      \hsigma_1(iT) \\
      \hsigma_2(iT)
\end{bmatrix} &
& \hspace{-4mm}= -\!\left[B\ B^\perp\right]^{-1} \Phi^{-1}(T)e^{A_eT}\tilde{x}(iT),
\end{aligned}\right.
\end{equation}
where $T$ is the estimation sampling time and $\Phi(T)\trieq A_e^{-1}\left(e^{A_eT}\!-I_n\right)$. Finally, the {\bf control law} is given by 

\begin{equation}\label{eq:l1-control-law}
    u_\at(s) = -\mcC(s)\laplace{\hsigma_1(t)}. 
\end{equation}
The control law \cref{eq:l1-control-law} tries to cancel the estimated (matched) uncertainty within the bandwidth of the filter $\mcC(s)$. Additionally, unmatched uncertainty estimate ($\hsigma_2(t)$) appears in \cref{eq:state-predictor,eq:adaptive_law}, although the system dynamics \cref{eq:dynamics-uncertain} contains only matched uncertainty. This is due to the adoption of the piecewise-constant adaptive law, which may produce nonzero value for $\hsigma_2(t)$. However, a non-zero $\hsigma_2(t)$ will not cause an issue either for implementation or for performance guarantee. Additionally, it is possible to prove that $\lim_{T\rightarrow 0} \hsigma_2(t) = 0 $ for any $t\geq 0$ \cite{zhao2021RALPV}, i.e., the estimated unmatched uncertainty will be close to zero when $T$ is  small.

\subsubsection{Uniform performance bounds}
We first define some constants:
\begin{subequations}\label{eq:alpha_012-gamma0-T-defn}
\begin{align}
    \bar \alpha_0(T) & \trieq \int_0^T\infnorm{e^{A_e(T-\tau)}B}d\tau,\label{eq:alpha_0-defn}\\
    \bar \alpha_1(T) &\trieq \max_{t\in [0,T]} \infnorm{e^{A_e t}},
   \label{eq:alpha_1-defn} \\
      \bar \alpha_2(T) &\trieq \max_{t\in [0,T]}\int_0^t \infnorm{e^{A_e (t-\tau)}\Phi^{-1}(T)e^{A_eT}}d\tau,\label{eq:alpha_2-defn}
        \\     \gamma_0(T) &\trieq b_{f,\mcX_a} \bar\alpha_0(T)\left(\bar a_1(T) + \bar a_2(T)+1\right), \label{eq:gamma_0-T-defn} 
\end{align}
\end{subequations}
where $\alpha_0(T),~\alpha_1(T)$ and $\alpha_2(T)$ are defined in \cref{eq:alpha_0-defn}, \cref{eq:alpha_1-defn,eq:alpha_2-defn}, respectively.
Clearly,  $b_{f,\mcX_a}$ for a compact set $\mcX_a$ and $\lim_{T\rightarrow 0}\bar \alpha_1(T) =0 $ are bounded, and $\lim_{T\rightarrow 0}\bar \alpha_0(T) =0$. By using Taylor series expansion of $e^{A_eT}$, one can show that $\lim_{T\rightarrow 0} \int_0^T \infnorm{\Phi^{-1}(T)}d\tau$ is bounded, which implies that $ \lim_{T\rightarrow 0} \bar \alpha_2(T)$ is bounded. As a result, we have
\begin{equation}\label{eq:gammaT-to-0}
    \lim_{T\rightarrow 0} \gamma_0(T) = 0.
\end{equation}
Further define 
\begin{align}
 \hspace{-2mm}    \rho_{ur} \trieq & \lonenorm{\mcC(s)} b_{f,\mcX_r}, \label{eq:rho_ur-defn} \\
\hspace{-2mm}  \gamma_2 \trieq  &  \lonenorm{\mcC(s)}\! L_{f,\mcX_a} \gamma_1 \!+\! \lonenorm{\mcC(s)B^\dagger(sI_n\!-\!A_e)}\!\gamma_0(T), \label{eq:gamma2-defn} \\
 \hspace{-2mm}  \rho_{u_\at} \trieq & \rho_{ur} +\gamma_2, \label{eq:rho-u-defn}
\end{align}
where $\gamma_1$ is introduced in \cref{eq:rho-defn}. Due to \cref{eq:gammaT-to-0,eq:l1-stability-condition-Lf}, we can always select a small enough $T>0$ such that 
\begin{equation}\label{eq:T-constraint}
\frac{\lonenorm{\mcH_{xm}(s)\mcC(s)B^\dagger(sI_n-A_e)}}{1-\lonenorm{\mcG_{xm}(s)}L_{f,\mcX_a}}\gamma_0(T) < \gamma_1,
\end{equation}
 where $\mcX_a$ is defined in \cref{eq:Xr-Xa-defn} and $B^\dagger$ is the pseudo-inverse of $B$. 

Following the convention for performance analysis of an \loneAC \cite{naira2010l1book}, we introduce the following reference system: 
\begin{subequations}\label{eq:ref-system}
\begin{align}
  \dot\xr(t) &\! =\! {A_m}x_\textup{r}(t) \!+\! {B_v}v(t) \!+\! B({u_{\textup{r}}}(t) \!+\! f(t,x_\textup{r}(t))), \hfill \\
  {u_{\textup{r}}}(s) & \! =\! -\mcC(s)\laplace{f(t,x_\textup{r}(t))}, \quad\xr(0)\! =\!x_0, \label{eq:ref-system-uar}
\end{align} 
\end{subequations}
Clearly, the control law in the reference system \cref{eq:ref-system} {\it partially} cancels the uncertainty $f(t,x_\textup{r}(t)))$ within the bandwidth of the filter $\mcC(s)$. Moreover, the control law depends on the true uncertainties and is thus {\it not implementable}. The reference system is introduced to help characterize the performance of the adaptive closed-loop system, which will be done in four sequential steps: (i) establishing the bounds on the states and inputs of the reference system (\cref{lem:ref-xr-ur-bnd}); (ii) quantifying the difference between the states and inputs of the adaptive system and those of the reference system (\cref{them:x-xref-bnd}); (iii) quantifying the difference between the states and inputs of the reference system and those of the nominal system (\cref{lem:ref-id-bnd}); (iv) based on the results from (ii) and (iii),  quantifying the difference between the states and inputs of the adaptive system and those of the nominal system (\cref{them:x-xid-bnd}).

The proofs of these lemmas and theorems mostly follow the typical \loneAC~analysis procedure \cite{naira2010l1book}, and are included in appendices for completeness. 

For notation brevity, we define:
\begin{equation}\label{eq:eta-eta_r-defn}
    \eta(t) \trieq f(t,x(t)),\quad \eta_\rt(t) \trieq f(t,x_\rt(t)).
\end{equation}
To provide an overview, \cref{table:diff-sys-in-l1AC-summary} summarizes the different (error) systems involved in this section and their related theorems/lemmas, the uniform bounds, the \loneAC~parameters and conditions.
{\setlength{\arraycolsep}{1pt}
\begin{table}[h]
\centering
\caption{An overview of different (error) systems involved in \cref{sec:sub-l1AC}, and their related theorem/lemma, uniform bounds, \loneAC~parameters and conditions}\label{table:diff-sys-in-l1AC-summary}
\footnotesize
\begin{tabular}{|l|l|l|l|l|l|}
\hline
& {\bf (Error) System}    & \makecell[l]{\bf Theorem/Lemma} & {\bf Uniform Bounds on States and Inputs} & \makecell[l]{\bf \loneAC~Parameters} & \makecell[l]{\bf Conditions} \\ 
\hline\hline
1 & Nominal system \cref{eq:nominal-cl-system-w-un}  & \cref{lem:L1-Linf-relation}  & $\linfnorm{\xn}\! \leq \! \rhoin + \lonenorm{\mcH_{xv}}\infnorm{v}$     &  N/A     &    N/A    \\ \hline
2 & Reference system \cref{eq:ref-system}  & \cref{lem:ref-xr-ur-bnd}     &  $\linfnorm{x_\textup{r}} \! < \! \rho_r$,   $\linfnorm{u_{\textup{r}}} \! <\! \rho_{ur}$  &  $\mcC(s)$  &   \cref{eq:l1-stability-condition}    \\ \hline
3 & \makecell[l]{Diff. b/t reference and adaptive systems}  & \cref{them:x-xref-bnd}  &   $ \linfnorm{\xr\!-\!x}  \! \leq \! \gamma_1,\
\linfnorm{u_{\textup{r}} \!-\! u_\at} \! \leq \! \gamma_2$ & $A_e,~T,~\mcC(s)$ & \cref{eq:T-constraint,eq:l1-stability-condition} 
\\  \hline

4 & \makecell[l]{Diff. b/t reference and nominal systems \\ }   &    \cref{lem:ref-id-bnd}                 &   \makecell[l]{$\left\| \xr \!-\!\xn \right\|_{{\mathcal L}{_\infty }} \! \leq \!  \lonenorm{\mcG_{xm}}b_{f,\mcX_r}$ 
}   &  $\mcC(s)$   &  \cref{eq:l1-stability-condition}  \\ \hline\hline

5 &\makecell[l]{\bf Adaptive system:  \cref{eq:dynamics-uncertain} and the \loneAC}  & \cref{them:x-xref-bnd}      &   $\linfnorm{x} <\rho$,   $\linfnorm{u_\at} <\rho_{u}$     &   $A_e,~T,~\mcC(s)$    &   \cref{eq:T-constraint,eq:l1-stability-condition}     \\ \hline

6 &\makecell[l]{\bf Diff. b/t adaptive  and nominal systems} & \cref{them:x-xid-bnd} &  $\linfnorm{x \!-\! \xn} \! \leq \! \tilde \rho$ 
& $A_e,~T,~\mcC(s)$ &  \cref{eq:T-constraint,eq:l1-stability-condition}    \\ \hline
\end{tabular}
\vspace{-3mm}
\end{table}}
The proofs for \cref{lem:ref-xr-ur-bnd,lem:xtilde-bnd,them:x-xref-bnd,lem:ref-id-bnd} are given in  \cref{sec:sub-proof-for-lem:ref-xr-ur-bnd,sec:sub-proof-lem:xtilde-bnd,sec:sub-proof-them:x-xref-bnd,sec:sub-proof-lem:ref-id-bound}.
\begin{lemma} \label{lem:ref-xr-ur-bnd}
For the closed-loop reference system in \eqref{eq:ref-system} subject to \cref{assump:lipschitz-bnd-fi} and the stability condition in  \eqref{eq:l1-stability-condition}, we have
\vspace{-4mm}\\
\begin{align}
 \linfnorm{x_\textup{r}} &<\rho_r \label{eq:xref-bnd}, \\ \linfnorm{u_{\textup{r}}} &<\rho_{ur}, \label{eq:uref-bnd}
\end{align}where $\rho_r$ is introduced in \cref{eq:l1-stability-condition}, and $\rho_{ur}$ is defined in \cref{eq:rho_ur-defn}.
\end{lemma}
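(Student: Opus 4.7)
The plan is to establish \eqref{eq:xref-bnd} by a standard contradiction/continuity argument, and then deduce \eqref{eq:uref-bnd} as a direct consequence. Note that \eqref{eq:l1-stability-condition} implies $\rho_r > \rho_{\textup{in}}$, so $\linfnormtruc{x_\textup{r}}{0} = \infnorm{x_0} \le \rho_{\textup{in}} < \rho_r$. Suppose, for the sake of contradiction, that $\linfnorm{x_\textup{r}} \ge \rho_r$. Since $x_\textup{r}(t)$ is continuous, there exists a first time $\tau > 0$ at which $\linfnormtruc{x_\textup{r}}{\tau} = \rho_r$. On the interval $[0,\tau]$ we then have $x_\textup{r}(t) \in \mcX_r = \Omega(\rho_r)$, so \cref{assump:lipschitz-bnd-fi} (applied with $\mcZ = \mcX_r$) yields $\linfnormtruc{\eta_\rt}{\tau} \le b_{f,\mcX_r}$.

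Next I would take the Laplace transform of \eqref{eq:ref-system}. Using $x_\textup{r}(0) = x_0$ and the notations in \eqref{eq:Hxm-Hxv-Gxm-defn}, this gives
\begin{equation*}
x_\textup{r}(s) = x_{\textup{in}}(s) + \mcH_{xm}(s)\bigl(u_\rt(s) + \eta_\rt(s)\bigr) + \mcH_{xv}(s) v(s).
\end{equation*}
Substituting the reference control law \eqref{eq:ref-system-uar}, $u_\rt(s) = -\mcC(s)\laplace{\eta_\rt}$, collapses the matched-channel term to $\mcH_{xm}(s)(I_m - \mcC(s))\eta_\rt(s) = \mcG_{xm}(s)\eta_\rt(s)$, so
\begin{equation*}
x_\textup{r}(s) = x_{\textup{in}}(s) + \mcG_{xm}(s)\eta_\rt(s) + \mcH_{xv}(s) v(s).
\end{equation*}

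Applying \cref{lem:L1-Linf-relation} to this truncated expression and using $\linfnorm{x_{\textup{in}}} \le \rho_{\textup{in}}$ together with the pointwise bound on $\eta_\rt$ just established, I would obtain
\begin{equation*}
\linfnormtruc{x_\textup{r}}{\tau} \le \rho_{\textup{in}} + \lonenorm{\mcG_{xm}(s)} b_{f,\mcX_r} + \lonenorm{\mcH_{xv}(s)} \linfnorm{v}.
\end{equation*}
By the strict stability condition \eqref{eq:l1-stability-condition}, the right-hand side is strictly less than $\rho_r$, contradicting $\linfnormtruc{x_\textup{r}}{\tau} = \rho_r$. Hence \eqref{eq:xref-bnd} holds on $[0,\infty)$. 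Because the bound is strict, $x_\textup{r}(t)$ stays inside $\mcX_r$ for all $t\ge 0$, so $\linfnorm{\eta_\rt} \le b_{f,\mcX_r}$; then \eqref{eq:ref-system-uar} combined with \cref{lem:L1-Linf-relation} gives $\linfnorm{u_\rt} \le \lonenorm{\mcC(s)} b_{f,\mcX_r} = \rho_{ur}$, which is \eqref{eq:uref-bnd}.

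There is no genuine obstacle here since \cref{assump:lipschitz-bnd-fi} supplies the needed uniform bound on $f$ on $\mcX_r$ and the stability inequality \eqref{eq:l1-stability-condition} was posed precisely so that the closing step produces a strict contradiction. The only subtlety to watch for is that the bound on $\eta_\rt$ used in the contradiction argument is valid only because we have restricted attention to the truncation interval $[0,\tau]$ on which $x_\textup{r}$ is known to lie in $\mcX_r$; the truncated $\mathcal L_\infty$ norm in \cref{lem:L1-Linf-relation} is what makes this bootstrap rigorous.
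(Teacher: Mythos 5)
Your proposal is correct and follows essentially the same route as the paper's own proof: a continuity/contradiction argument on the first time $\infnorm{x_\textup{r}(\tau)}=\rho_r$, the Laplace-domain identity $x_\textup{r}(s)=\xin(s)+\mcG_{xm}(s)\laplace{\eta_\rt}+\mcH_{xv}(s)v(s)$, the truncated-norm bound from \cref{lem:L1-Linf-relation}, and the strict inequality \eqref{eq:l1-stability-condition} to close the contradiction, with \eqref{eq:uref-bnd} then read off from \eqref{eq:ref-system-uar}. The only cosmetic difference is that you spell out the substitution of the filter into the matched channel, which the paper states directly as \eqref{eq:xr-expression-w-G-f}.
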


From \cref{eq:dynamics-uncertain,eq:state-predictor}, the prediction error dynamics are given by 
\begin{equation}\label{eq:prediction-error}
\begin{aligned}
      \dot{\tilx}(t) &= A_e \tilx(t) + B\left(\hsigma_1(t) - f(t,x(t))\right) + B^\perp \hsigma_2(t).
\end{aligned}
\end{equation}
The following lemma establishes a bound on the prediction error under the assumption that the actual states and adaptive inputs are bounded. 
\begin{lemma}\label{lem:xtilde-bnd}
Given the uncertain system \eqref{eq:dynamics-uncertain} subject to \cref{assump:lipschitz-bnd-fi}, the state predictor  \eqref{eq:state-predictor} and the adaptive law \eqref{eq:adaptive_law}, if 
\begin{equation}\label{eq:x-u-tau-bnd-assump-in-lemma}
    \linfnormtruc{x}{\tau}\leq \rho, \quad \linfnormtruc{u_\at}{\tau}\leq \rho_{u_\at},
\end{equation}
with $\rho$ and $\rho_{u_\at}$ defined in \cref{eq:rho-u-defn,eq:rho-defn}, respectively, then 
\begin{align}
  \linfnormtruc{\tilx}{\tau} \leq \gamma_0(T). \label{eq:tilx_tau-leq-gamma0}
\end{align}
\end{lemma}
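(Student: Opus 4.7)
The plan is to exploit the piecewise-constant adaptive law so that, at each sampling instant $(i+1)T$, the contributions of $B\hat\sigma_1(iT)+B^\perp\hat\sigma_2(iT)$ exactly cancel the homogeneous part $e^{A_eT}\tilde x(iT)$, leaving only a residual driven by $Bf(\tau,x(\tau))$. First I would note $\tilde x(0)=\hat x(0)-x(0)=0$, and write the variation-of-constants formula for \cref{eq:prediction-error} on an interval $[iT,(i+1)T)$:
\begin{equation*}
\tilde x(t)=e^{A_e(t-iT)}\tilde x(iT)+\!\!\int_{iT}^{t}\!\!e^{A_e(t-\tau)}\!\left[B\hat\sigma_1(iT)+B^\perp\hat\sigma_2(iT)-Bf(\tau,x(\tau))\right]d\tau.
\end{equation*}

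Using the adaptive law \cref{eq:adaptive_law}, the bracketed matrix combination simplifies as $B\hat\sigma_1(iT)+B^\perp\hat\sigma_2(iT)=[B\ B^\perp]\bigl[\hat\sigma_1(iT);\hat\sigma_2(iT)\bigr]=-\Phi^{-1}(T)e^{A_eT}\tilde x(iT)$. Substituting this and evaluating at $t=(i+1)T$, the change of variable $s=(i+1)T-\tau$ yields $\int_{iT}^{(i+1)T}e^{A_e((i+1)T-\tau)}d\tau=\int_0^T e^{A_e s}ds=\Phi(T)$, so the term $e^{A_eT}\tilde x(iT)$ is cancelled by $\Phi(T)\cdot(-\Phi^{-1}(T)e^{A_eT}\tilde x(iT))$, leaving
\begin{equation*}
\tilde x((i\!+\!1)T)=-\!\!\int_{iT}^{(i+1)T}\!\!e^{A_e((i+1)T-\tau)}Bf(\tau,x(\tau))\,d\tau.
\end{equation*}
Since $\|x\|_{\mathcal{L}_\infty^{[0,\tau]}}\le\rho$ implies $x(t)\in\mathcal{X}_a=\Omega(\rho)$ and hence $\|f(t,x(t))\|_\infty\le b_{f,\mathcal{X}_a}$ by \cref{eq:bnd-f}, taking $\infty$-norms gives $\|\tilde x(iT)\|_\infty\le b_{f,\mathcal{X}_a}\bar\alpha_0(T)$ for all $iT\in[0,\tau]$ (including $i=0$, where the bound holds trivially since $\tilde x(0)=0$).

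For intersample $t\in[iT,(i+1)T)$, I would return to the variation-of-constants formula with the adaptive law substituted, take $\infty$-norms, and distribute across three terms: the homogeneous part bounded by $\bar\alpha_1(T)\|\tilde x(iT)\|_\infty$ via \cref{eq:alpha_1-defn}; the feedback-cancellation integral bounded by $\bar\alpha_2(T)\|\tilde x(iT)\|_\infty$ via \cref{eq:alpha_2-defn}; and the disturbance integral bounded by $b_{f,\mathcal{X}_a}\bar\alpha_0(T)$ via \cref{eq:alpha_0-defn}. Combining these with the sampling-instant bound, one obtains $\|\tilde x(t)\|_\infty\le b_{f,\mathcal{X}_a}\bar\alpha_0(T)(\bar\alpha_1(T)+\bar\alpha_2(T)+1)=\gamma_0(T)$ uniformly on $[0,\tau]$.

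The routine part is the variation-of-constants manipulation and the norm estimates; the critical step is recognizing that $\int_0^T e^{A_es}ds=\Phi(T)$ exactly inverts the $\Phi^{-1}(T)$ in the adaptive law so that the only remaining driver at sampling instants is the uncertainty $f$, whose bound depends solely on the a priori assumption $x(t)\in\mathcal{X}_a$. I expect no significant obstacle; note that the bound on $u_{\rm a}$ in \cref{eq:x-u-tau-bnd-assump-in-lemma} is not needed here because $u_{\rm a}$ appears identically in both \cref{eq:dynamics-uncertain} and \cref{eq:state-predictor} and therefore drops out of the prediction-error dynamics \cref{eq:prediction-error}.
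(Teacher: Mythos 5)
Your proposal is correct and follows essentially the same route as the paper's proof: variation of constants on each sampling interval, exact cancellation of $e^{A_eT}\tilde x(iT)$ via $\int_0^T e^{A_e s}ds=\Phi(T)$, the resulting bound $\bar\alpha_0(T)b_{f,\mathcal{X}_a}$ at sampling instants, and the three-term intersample estimate yielding $\gamma_0(T)$. Your closing observation that the hypothesis on $u_{\mathrm a}$ is not actually used (since $u_{\mathrm a}$ cancels in the prediction-error dynamics) is also accurate and consistent with the paper's argument.
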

\begin{theorem}\label{them:x-xref-bnd}
{Given the uncertain system \eqref{eq:dynamics-uncertain} subject to  \cref{assump:lipschitz-bnd-fi} and  the reference system \eqref{eq:ref-system} subject to the conditions \cref{eq:l1-stability-condition-Lf,eq:l1-stability-condition}  with a constant $\gamma_1>0$, with the \loneAC~defined via \cref{eq:state-predictor,eq:adaptive_law,eq:l1-control-law} subject to the sample time constraint  \eqref{eq:T-constraint}, we have }
\begin{subequations}
\begin{align}
    \linfnorm{x} & \leq \rho, \label{eq:x-bnd}\\
    \linfnorm{u_\at} & \leq \rho_{u_\at}, \label{eq:ua-bnd}\\
    \linfnorm{x_\textup{r}-x} &\leq \gamma_1, \label{eq:xref-x-bnd}\\
    \linfnorm{u_{\textup{r}}-u_\at} &\leq \gamma_2, \label{eq:uref-u-bnd}
\end{align}
\end{subequations}
where $\rho$,$\gamma_2$ and $\rho_{u_\at}$ are defined in \cref{eq:rho-defn}, 
 \cref{eq:gamma2-defn,eq:rho-u-defn}, respectively. 
\end{theorem}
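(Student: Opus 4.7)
The plan is a continuity-and-contradiction argument standard for $\mathcal{L}_1$ adaptive control analysis \cite{naira2010l1book}. At $t=0$ the state/input bounds \cref{eq:x-bnd,eq:ua-bnd} hold strictly, since $\infnorm{x_0}\leq \rhoin < \rho_r < \rho$ and $u_\at(0)=0<\rho_{u_\at}$. If the bounds fail for some $t>0$, by continuity there exists a smallest time $\tau>0$ at which equality is attained in at least one of them, while $\linfnormtruc{x}{\tau}\leq \rho$ and $\linfnormtruc{u_\at}{\tau}\leq \rho_{u_\at}$ both hold on $[0,\tau]$. These are precisely the hypotheses of \cref{lem:xtilde-bnd}, which then yields the prediction-error bound $\linfnormtruc{\tilx}{\tau}\leq \gamma_0(T)$.

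Next I would relate $\tilx$ to the mismatches $\xr-x$ and $\ur-u_\at$. Subtracting \cref{eq:dynamics-uncertain} from \cref{eq:ref-system} and substituting $u_\at(s)=-\mcC(s)\hsigma_1(s)$ and $\ur(s)=-\mcC(s)\eta_\rt(s)$, I would write, in the Laplace domain,
\begin{equation*}
\xr(s)-x(s) = \mcG_{xm}(s)\bigl(\eta_\rt(s)-\eta(s)\bigr) + \mcH_{xm}(s)\mcC(s)\bigl(\hsigma_1(s)-\eta(s)\bigr).
\end{equation*}
From the prediction-error dynamics \cref{eq:prediction-error} with $\tilx(0)=0$, premultiplying by $[B\ B^\perp]^{-1}$ and using $B^\dagger B=I_m$, $B^\dagger B^\perp=0$, one isolates $\hsigma_1(s)-\eta(s) = B^\dagger(sI_n-A_e)\tilx(s)$, which ties the error to $\tilx$ alone in the matched channel. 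The analogous identity for the input mismatch reads $\ur(s)-u_\at(s) = -\mcC(s)(\eta_\rt(s)-\eta(s)) + \mcC(s) B^\dagger(sI_n-A_e)\tilx(s)$.

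I would then take truncated $\mathcal{L}_\infty$-norms on $[0,\tau]$ via \cref{lem:L1-Linf-relation}, bounding $\linfnormtruc{\eta_\rt-\eta}{\tau} \leq L_{f,\mcX_a}\linfnormtruc{\xr-x}{\tau}$ using \cref{eq:lipschitz-cond-f}, which is legal because $\xr\in\mcX_a$ by \cref{lem:ref-xr-ur-bnd} and $x\in\mcX_a$ on $[0,\tau]$ by the contradiction hypothesis. Solving the resulting inequality for $\linfnormtruc{\xr-x}{\tau}$, using \cref{eq:l1-stability-condition-Lf} to ensure the denominator $1-\lonenorm{\mcG_{xm}(s)}L_{f,\mcX_a}$ is positive, and invoking the sample-time constraint \cref{eq:T-constraint} yields $\linfnormtruc{\xr-x}{\tau}<\gamma_1$; substituting into the input identity gives $\linfnormtruc{\ur-u_\at}{\tau}\leq \gamma_2$. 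Combining with $\linfnorm{\xr}<\rho_r$ and $\linfnorm{\ur}<\rho_{ur}$ from \cref{lem:ref-xr-ur-bnd} via the triangle inequality produces the strict bounds $\linfnormtruc{x}{\tau}<\rho_r+\gamma_1=\rho$ and $\linfnormtruc{u_\at}{\tau}<\rho_{ur}+\gamma_2=\rho_{u_\at}$, contradicting the defining equality at $\tau$. Hence \cref{eq:x-bnd,eq:ua-bnd} hold for all $t\geq 0$, and the same error bounds, now valid on all of $\mathbb{R}_+$, deliver \cref{eq:xref-x-bnd,eq:uref-u-bnd}.

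The main obstacle I anticipate is the bookkeeping around the unmatched channel: correctly decomposing the prediction-error equation so that $\hsigma_2$ is cleanly eliminated and the identity $\hsigma_1-\eta=B^\dagger(sI_n-A_e)\tilx$ emerges in the matched coordinates. Once this identity is in hand, the remainder is a standard $\mathcal{L}_1$–$\mathcal{L}_\infty$ estimate closed by \cref{eq:T-constraint}; the only other nontrivial point is verifying that $\xr$ and $x$ both lie in $\mcX_a$ on $[0,\tau]$ so that the single Lipschitz constant $L_{f,\mcX_a}$ governs the mismatch $\eta_\rt-\eta$.
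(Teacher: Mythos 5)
Your proposal is correct and follows essentially the same route as the paper's proof: a continuity-and-contradiction argument that invokes \cref{lem:xtilde-bnd} for the prediction-error bound, the Laplace-domain identities $\hsigma_1(s)-\laplace{f(t,x)}=B^\dagger(sI_n-A_e)\tilx(s)$ and $\xr(s)-x(s)=\mcG_{xm}(s)\laplace{\eta_\rt-\eta}+\mcH_{xm}(s)\mcC(s)B^\dagger(sI_n-A_e)\tilx(s)$, and the small-gain closure via \cref{eq:l1-stability-condition-Lf,eq:T-constraint}. The only cosmetic difference is that you anchor the contradiction on the magnitude bounds \cref{eq:x-bnd,eq:ua-bnd} while the paper anchors it on the error bounds \cref{eq:xref-x-bnd,eq:uref-u-bnd}; the two setups are interchangeable via the triangle inequality with \cref{lem:ref-xr-ur-bnd}.
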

\begin{remark}\label{rem:ref-ad-bnd-T-discussion}
For an arbitrarily small $\gamma_1>0$, one can always find a small enough $T$ such that the constraint \cref{eq:T-constraint} is satisfied.  According to \cref{eq:gamma2-defn}, $\gamma_2$ depends on $\gamma_1$ and $\gamma_0(T)$, and can be made arbitrarily small by reducing $\gamma_1$ and $T$. Thus, by reducing $T$, both $\gamma_1$ and $\gamma_2$ can be made arbitrarily small, which indicates that the difference between the inputs and states of the adaptive system and those of the reference system can be made arbitrarily small from \cref{them:x-xref-bnd}.
\vspace{0mm} \end{remark}
\begin{lemma} \label{lem:ref-id-bnd}
Given the reference system \eqref{eq:ref-system} and the nominal system \eqref{eq:nominal-cl-system},  subject to \cref{assump:lipschitz-bnd-fi}, and the condition  \cref{eq:l1-stability-condition}, we have
\begin{align}
{\left\| {{x_\textup{r}} - \xn} \right\|_{{\mathcal L}{_\infty }}} & \leq  \lonenorm{\mcG_{xm}}b_{f,\mcX_r} \label{eq:xref-xid-bnd} 
\end{align}
\end{lemma}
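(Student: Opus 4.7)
My plan is to form the error system $e(t) \triangleq \xr(t) - \xn(t)$, pass to the Laplace domain where the filter structure collapses the dynamics into a product involving $\mcG_{xm}(s)$, and then bound the result using \cref{lem:L1-Linf-relation} together with the uniform state bound on the reference system from \cref{lem:ref-xr-ur-bnd}.

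First I would subtract \cref{eq:nominal-cl-system} from \cref{eq:ref-system}. Since both systems share the same initial state $x_0$ and the same signal $B_v v(t)$, the error satisfies
\begin{equation*}
\dot e(t) = A_m e(t) + B\bigl(u_\rt(t) + f(t,\xr(t))\bigr), \quad e(0)=0.
\end{equation*}
Using the control law \cref{eq:ref-system-uar} and the notation $\eta_\rt(s)=\laplace{f(t,\xr(t))}$ from \cref{eq:eta-eta_r-defn}, the Laplace-domain input to the error channel becomes $u_\rt(s)+\eta_\rt(s) = (I_m-\mcC(s))\eta_\rt(s)$. Thus
\begin{equation*}
e(s) = (sI_n - A_m)^{-1}B\,(I_m-\mcC(s))\,\eta_\rt(s) = \mcG_{xm}(s)\eta_\rt(s),
\end{equation*}
where the last equality uses the definitions \cref{eq:Hxm-Hxv-defn,eq:Gxm-Gm-defn}.

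Next, because $e(0)=0$, I would apply \cref{lem:L1-Linf-relation} to obtain $\linfnorm{e} \le \lonenorm{\mcG_{xm}(s)}\linfnorm{\eta_\rt}$. To bound $\linfnorm{\eta_\rt}$, I invoke \cref{lem:ref-xr-ur-bnd}, which under \cref{eq:l1-stability-condition} guarantees $\linfnorm{\xr}<\rho_r$, i.e.\ $\xr(t)\in\mcX_r=\Omega(\rho_r)$ for all $t\ge 0$. Then the uniform bound \cref{eq:bnd-f} from \cref{assump:lipschitz-bnd-fi} applied on $\mcX_r$ gives $\infnorm{f(t,\xr(t))}\le b_{f,\mcX_r}$, hence $\linfnorm{\eta_\rt}\le b_{f,\mcX_r}$. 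Chaining these inequalities yields exactly \cref{eq:xref-xid-bnd}.

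There is no real obstacle here; the proof is essentially bookkeeping. The only subtlety is making sure the hypotheses of \cref{lem:ref-xr-ur-bnd} (namely \cref{assump:lipschitz-bnd-fi} and the stability condition \cref{eq:l1-stability-condition}) are in force so that the state bound $\linfnorm{\xr}<\rho_r$ is available to justify evaluating $b_{f,\cdot}$ on $\mcX_r$ rather than on a larger set, which is precisely what makes the resulting bound tight and useful for the subsequent constraint-tightening step.
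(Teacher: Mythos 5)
Your proposal is correct and follows essentially the same route as the paper: both derive $\xr(s)-\xn(s)=\mcG_{xm}(s)\laplace{\eta_\rt(t)}$ (the paper states this directly, you derive it from the error dynamics), then combine \cref{lem:L1-Linf-relation} with the bound $\xr(t)\in\Omega(\rho_r)=\mcX_r$ from \cref{lem:ref-xr-ur-bnd} and \cref{eq:bnd-f} to conclude. No gaps.
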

\begin{remark}\label{rem:xref-xnom-diff-discussion}
When the bandwidth of the filter $\mcC(s)$ goes to infinity,  $\lonenorm{\mcG_{xm}}$ and thus  ${\left\| {{x_\textup{r}} - \xn} \right\|_{{\mathcal L}{_\infty }}}$ go to 0. This indicates that the difference between the states of the reference system and those of the nominal system can be made arbitrarily small by increasing the filter bandwidth.  However, a high-bandwidth filter allows for high-frequency control signals to enter the system under fast adaptation (corresponding to small $T$), compromising the robustness. 
Thus, the filter presents a trade-off between  robustness and performance.  More details about the role and design of the filter can be found in \cite{naira2010l1book}.
\end{remark}

From \cref{them:x-xref-bnd},  \cref{lem:ref-id-bnd} and application of the triangle inequality, we can obtain uniform bounds on the error between the actual system \cref{eq:dynamics-uncertain} and the nominal system \cref{eq:nominal-cl-system}, formally stated in the following theorem. The proof is straightforward and thus omitted.   
\begin{theorem}\label{them:x-xid-bnd}
Given the uncertain system \eqref{eq:dynamics-uncertain} subject to \cref{assump:lipschitz-bnd-fi}, the nominal system \eqref{eq:nominal-cl-system}, and the \loneAC~defined via \cref{eq:state-predictor,eq:adaptive_law,eq:l1-control-law} subject to the conditions \cref{eq:l1-stability-condition-Lf,eq:l1-stability-condition} with a constant $\gamma_1>0$ and the sample time constraint \eqref{eq:T-constraint}, we have
\begin{align}
\linfnorm{x-\xn} &\leq \tilde \rho, \label{eq:x-xid-bnd} \\
 \linfnorm{u_\at} &\leq \rho_{u_\at}, \label{eq:ua-bnd-2}
\end{align} where $\rho_{u_\at}$ is defined in \cref{eq:rho-u-defn}, and
\begin{align}
    \tilde \rho &\trieq \lonenorm{\mcG_{xm}(s)}b_{f,\mcX_r}+\gamma_1. \label{eq:tilrho-defn}
\end{align}
\end{theorem}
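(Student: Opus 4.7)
The plan is to assemble the claim by stitching together the two preceding results, \cref{them:x-xref-bnd} and \cref{lem:ref-id-bnd}, via a triangle inequality in the $\mathcal{L}_\infty$ norm. The reference system \cref{eq:ref-system} is introduced precisely as an intermediate object that is close both to the uncertain closed-loop system and to the nominal system, so the present theorem is essentially the composition of these two closeness bounds.

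First, I would invoke \cref{them:x-xref-bnd} under the stated hypotheses (\cref{assump:lipschitz-bnd-fi}, the stability conditions \cref{eq:l1-stability-condition-Lf,eq:l1-stability-condition}, and the sample-time condition \cref{eq:T-constraint}) to extract \cref{eq:xref-x-bnd}, namely $\linfnorm{x_\textup{r}-x}\leq \gamma_1$, together with \cref{eq:ua-bnd}, $\linfnorm{u_\at}\leq \rho_{u_\at}$. The latter immediately gives \cref{eq:ua-bnd-2}, so no further work is needed for the input bound.

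Next, I would apply \cref{lem:ref-id-bnd} (which requires only \cref{assump:lipschitz-bnd-fi} and \cref{eq:l1-stability-condition}) to obtain $\linfnorm{x_\textup{r}-\xn}\leq \lonenorm{\mcG_{xm}(s)}b_{f,\mcX_r}$. Then the triangle inequality in the $\mathcal{L}_\infty$ norm yields
\begin{equation*}
\linfnorm{x-\xn}\;\leq\;\linfnorm{x-x_\textup{r}}+\linfnorm{x_\textup{r}-\xn}\;\leq\;\gamma_1+\lonenorm{\mcG_{xm}(s)}b_{f,\mcX_r}\;=\;\tilde\rho,
\end{equation*}
which is exactly \cref{eq:x-xid-bnd} with $\tilde\rho$ as in \cref{eq:tilrho-defn}.

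There is no real obstacle here: all the heavy lifting (the small-gain-type argument that keeps the adaptive state inside $\mcX_a$, and the filter-based bound on $x_\textup{r}-\xn$) has already been carried out in the preceding lemma and theorem. The only thing to verify is that the hypotheses of \cref{them:x-xref-bnd} subsume those of \cref{lem:ref-id-bnd}, which is immediate since the former already assumes \cref{assump:lipschitz-bnd-fi} and \cref{eq:l1-stability-condition}. Hence the proof reduces to the one-line triangle-inequality chain above and can safely be omitted, as the authors do.
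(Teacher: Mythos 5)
Your proposal matches the paper's intended argument exactly: the authors state that the theorem follows from \cref{them:x-xref-bnd}, \cref{lem:ref-id-bnd} and the triangle inequality, and omit the details. Your chain $\linfnorm{x-\xn}\leq\linfnorm{x-x_\textup{r}}+\linfnorm{x_\textup{r}-\xn}\leq\gamma_1+\lonenorm{\mcG_{xm}(s)}b_{f,\mcX_r}$ together with reading off \cref{eq:ua-bnd} is precisely that omitted proof.
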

\begin{remark}\label{rem:x-xid-discussion-Cs-T}
From \cref{rem:xref-xnom-diff-discussion,rem:ref-ad-bnd-T-discussion}, by decreasing $T$ and increasing the bandwidth of the filter $\mcC(s)$, one can make (i) {the states of the adaptive system arbitrarily close to those of the nominal system}; and (ii) the adaptive inputs $u_\at(t)$ arbitrarily close to $f(t,x)$, i.e., the true uncertainty, since $f(t,\xr)$ is arbitrarily close to $f(t,x)$ when the error between $x(t)$ and $\xr(t)$ is arbitrarily small.
\end{remark}

\section{\loneAC~with Separate Bounds for States and Inputs}\label{sec:l1ac-separate-bnds}
 In \cref{sec:sub-l1AC}, we presented an \loneAC~that guarantees uniform bounds on the states and adaptive control inputs of the adaptive system with respect to the nominal system, without consideration of the constraints \cref{eq:constraints}. However, as can be seen from \cref{them:x-xid-bnd}, the uniform bound on $x(t)-\xn(t)$ or $u_\at(t)$ is represented by the vector-$\infty$ norm, which always leads to the {\it same bound} for all the states, $x_i-x_{\nt,i}(t)$ ($i\in\mbZ_1^n$), or all the adaptive inputs, $u_{\at,j}$ ($j\in\mbZ_1^m$). The use of vector-$\infty$ norms may lead to conservative bounds for some specific states or adaptive inputs, making it impossible to satisfy the constraints \cref{eq:constraints} or leading to significantly tightened constraints for the RG design. 
To reduce such conservatism, this section will present a scaling technique to  derive an {\it individual bound} for each $x_i(t)-x_{\textup n,i}(t)$ ($i\in\Zn$) and $u_{\at,j}(t)$ ($j\in\Zm$).

From \cref{them:x-xid-bnd}, one can see that the bound on $x(t)-\xn(t)$ (or $u_\at(t)$) consists of two parts: the first part is $\gamma_1$ (or $\gamma_2$) that can be made arbitrarily small by reducing $T$ (see \cref{rem:ref-ad-bnd-T-discussion}), while the second part is a bound on $\xr(t)-\xn(t)$ (or $\ur(t)$). 
Next, we will derive an individual bound for each $x_i(t)-x_{\textup r,t}(t)$ (or $u_{\rt,j}(t)$). \\
\textbf{Derive Separate Bounds for States via Scaling}:
For deriving an individual bound for each $x_i(t)-x_{\textup r,t}(t)$,  we introduce the following coordinate transformations for the reference system \cref{eq:ref-system} and the nominal system \cref{eq:nominal-cl-system} for each $i\in\Zn$:
\begin{equation}\label{eq:coordinate-trans}
\left \{
    \begin{aligned}
        \check x_\rt & = \Tau_x^ix_\rt,\quad \check x_{\nt} = \Tau_x^i \xn, \\
         \check A_m^i &= \Tau_x^iA_m (\Tau_x^i)^{-1},\\
         \check B^i  & = \Tau_x^iB, \quad \check B^i _v = \Tau_x^iB_v,   
    \end{aligned}
    \right.
\end{equation}
where $\Tau_x^i\!>\!0$ is a diagonal matrix that satisfies 
\begin{align}
    \Tau_x^i[i]&=1, \ 0<\Tau_x^i[k]\leq 1, \ \forall k\neq i, \label{eq:Tx-i-cts}
\end{align}
with $\Tau_x^i[k]$ denoting the $k$th diagonal element. Under the transformation \cref{eq:coordinate-trans}, the reference system \cref{eq:ref-system} is converted to 
\begin{equation}\label{eq:ref-system-transformed}
\hspace{-2mm}
\left\{
\begin{aligned}
  \dot {\check x}_\rt(t) & \!=\! \check A_m^i\check x_\rt(t)\! +\!  \check  B_v^iv(t) \!+\!  \check B^i (u_{\rt}(t) \!+\! \!\check f(t, \check x_\rt(t))),  \\ 
    {u_{\textup{r}}}(s) & \! =\! -\mcC(s)\laplace{\check f(t,\check x_\textup{r}(t))},  \ \check x(0) \!=\! \Tau_x^ix_0,
\end{aligned}\right. 
\end{equation}
where 
\begin{equation}\label{eq:f-checkf-relation}
    \check f(t,\check x_\rt(t)) =  f(t, x_\rt(t))) =  f(t, (\Tau_x^i)^{-1}\check x_\rt(t))).
\end{equation}Given a set $\mcZ$, define 
\begin{equation}\label{eq:check-Z-defn}
    \check \mcZ\trieq \{\check z\in \mbR^n: (\Tau_x^i)^{-1}\check z \in \mcZ\}.
\end{equation}
Similar to \cref{eq:Hxm-Hxv-Gxm-defn}, for the transformed reference system \cref{eq:ref-system-transformed}, we have
\begin{subequations}\label{eq:Hxm-Hxv-Gxm-check-defn}
\begin{align}
  \mcH_{\check xm}^i(s) &\trieq (sI_n \!-\! \check A_m^i)^{-1}\! \check B^i  = \Tau_x^i \mcH_{xm}(s), \label{eq:Hxm-check-defn}  \\
     \mcH_{\check xv}^i(s) & \trieq (sI_n \!-\! \check A_m^i)^{-1} \! \check B^i _v = \Tau_x^i \mcH_{xv}(s), \label{eq:Hxv-check-defn} \\
  \mcG_{\check xm}^i(s) & \trieq \mcH_{\check xm}^i(s)(I_m-\mcC(s)) =  \Tau_x^i \mcG_{xm}(s), \label{eq:Gxm-check-defn}
\end{align}
\end{subequations}
where $\mcH_{xm},~\mcH_{xv},~\mcG_{xm}$ are defined in \cref{eq:Hxm-Hxv-Gxm-defn}. By applying the transformation \cref{eq:coordinate-trans} to the nominal system \cref{eq:nominal-cl-system}, we obtain 
\begin{equation}\label{eq:nominal-cl-system-transformed}
\hspace{-2mm}
\left\{
\begin{aligned}
  \dot {\check x}_{\textup n}(t) & = \check A_m^i\xcheckn(t) + \check B^i _vv(t),\ \xcheckn(0) \!=\! \Tau_x^ix_0,  \\ 
  \yn(t) & = \check C \xcheckn(t).
\end{aligned}\right. 
\end{equation}
Letting $\xcheckin(t)$ be the state of the system 
    $\dot{\check x}_\textup{in}(t) = \check A_m^i \xcheckin(t)$ with $\xcheckin(0) = \check x_\nt(0) = \Tau_x^i x_0$, 
we have $\xcheckin(s) \trieq (sI_n-\check A_m^i)^{-1} \xcheckin(0) = \Tau_x^i(sI_n- A_m)^{-1} x_0$. Defining
\begin{equation}\label{eq:check-rhoin-defn}
   \check \rho_\textup{in}^i \trieq \lonenorm{s\Tau_x^i(sI_n- A_m)^{-1}}\max_{x_0\in \mcX_0}\infnorm{x_0},
\end{equation}
and further considering \cref{lem:L1-Linf-relation}, we have 
    $\linfnorm{\xcheckin}\leq \check \rho_\textup{in}^i$.
Similar to \cref{eq:l1-stability-condition}, for the transformed reference system \cref{eq:ref-system-transformed}, consider the following condition:
\begin{align}
\lonenorm{\mcG_{\check xm}^i(s)}b_{\check f,\check \mcX_r}  < \check \rho_r^i - \lonenorm{\mcH_{\check x v}^i(s)}\linfnorm{v} - \check \rho_\textup{in}^i ,\label{eq:l1-stability-condition-transformed} %
\end{align}
where 
$\mcX_r$ is defined in \cref{eq:Xr-Xa-defn} and $\check \mcX_r$ is defined according to \cref{eq:check-Z-defn} and $\check \rho_r^i$ is a positive constant to be determined. 
Then we have the following result. 

\begin{lemma}\label{lem:refine_bnd_xi_w_Txi}
Consider the reference system \eqref{eq:ref-system} 
subject to \cref{assump:lipschitz-bnd-fi}, the nominal system \cref{eq:nominal-cl-system}, the transformed reference system \cref{eq:ref-system-transformed} and transformed nominal system \cref{eq:nominal-cl-system-transformed} obtained by applying \cref{eq:coordinate-trans} with any $\Tau_x^i$ satisfying \cref{eq:Tx-i-cts}. 
Suppose that \cref{eq:l1-stability-condition} holds with some constants $\rho_r$ and $\linfnorm{v}$.
Then, 
there exists an constant  $\check \rho_r^i\leq \rho_r$ such that \cref{eq:l1-stability-condition-transformed}
holds with the same $\linfnorm{v}$. Furthermore,  
\begin{align}
\abs{x_{\rt,i}(t)} & \leq  \check \rho_r^i, \  \forall t\geq0, \label{eq:xr-i-bnd-from-trans} \\
    \abs{x_\textup{r,i}(t)-x_{\nt,i}(t)} & \leq \lonenorm{\mcG_{ \check xm}(s)}b_{f,\mcX_r},\ \forall t\geq 0, \label{eq:xri-xni-bnd-from-trans}
\end{align}
where we re-define \begin{equation}\label{eq:Xr-defn}
   \mcX_r\trieq \left\{z\in\mbR^n: \abs{z_i}\leq \check \rho_r^i, i\in\Zn\right\}. 
\end{equation}
\end{lemma}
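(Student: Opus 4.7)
My plan is to apply \cref{lem:ref-xr-ur-bnd,lem:ref-id-bnd} directly to the coordinate-transformed reference system \cref{eq:ref-system-transformed} and transformed nominal system \cref{eq:nominal-cl-system-transformed}, and then to extract the $i$th-component bounds by using $\Tau_x^i[i]=1$. Since the coordinate change preserves the structural form of the closed loop, each earlier result applies verbatim to the ``checked'' quantities; only the existence of an admissible $\check\rho_r^i$ needs a separate argument.

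To produce $\check\rho_r^i \leq \rho_r$ satisfying \cref{eq:l1-stability-condition-transformed}, I first note three monotone inequalities. From \cref{eq:Hxm-Hxv-Gxm-check-defn} each transformed transfer function equals the original pre-multiplied by $\Tau_x^i$, and since $\infnorm{\Tau_x^i}=1$ by \cref{eq:Tx-i-cts}, \cref{lem:L1-Linf-relation} gives $\lonenorm{\mcG_{\check xm}^i(s)} \leq \lonenorm{\mcG_{xm}(s)}$ and $\lonenorm{\mcH_{\check xv}^i(s)} \leq \lonenorm{\mcH_{xv}(s)}$, while the analogous bound on \cref{eq:check-rhoin-defn} yields $\check\rho_{\textup{in}}^i \leq \rho_{\textup{in}}$. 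Next, by \cref{eq:f-checkf-relation,eq:check-Z-defn} any $\check z \in \check\mcX_r$ satisfies $(\Tau_x^i)^{-1}\check z \in \mcX_r$, hence $b_{\check f,\check\mcX_r} \leq b_{f,\mcX_r}$. Consequently, setting $\check\rho_r^i = \rho_r$ makes the LHS of \cref{eq:l1-stability-condition-transformed} no larger, and the RHS no smaller, than their counterparts in \cref{eq:l1-stability-condition}, which holds strictly by hypothesis, so \cref{eq:l1-stability-condition-transformed} is satisfied; continuity in $\check\rho_r^i$ then lets one shrink it below $\rho_r$ if desired.

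With \cref{eq:l1-stability-condition-transformed} in force, the transformed system \cref{eq:ref-system-transformed} meets every hypothesis of \cref{lem:ref-xr-ur-bnd}, giving $\linfnorm{\check x_\rt} < \check\rho_r^i$. Likewise, \cref{lem:ref-id-bnd} applied to the pair \cref{eq:ref-system-transformed}--\cref{eq:nominal-cl-system-transformed} gives $\linfnorm{\check x_\rt - \check x_\nt} \leq \lonenorm{\mcG_{\check xm}^i(s)}\, b_{\check f,\check\mcX_r} \leq \lonenorm{\mcG_{\check xm}^i(s)}\, b_{f,\mcX_r}$. Because $\Tau_x^i[i]=1$, the $i$th row of $\Tau_x^i$ is $e_i^\top$, so $\check x_{\rt,i}(t) = x_{\rt,i}(t)$ and $\check x_{\nt,i}(t) = x_{\nt,i}(t)$; extracting the $i$th entry of each $\infty$-norm bound then yields \cref{eq:xr-i-bnd-from-trans,eq:xri-xni-bnd-from-trans}.

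The main obstacle is the apparent circularity in the redefined $\mcX_r$ of \cref{eq:Xr-defn}, whose shape depends on the very constants $\check\rho_r^i$ that the lemma itself produces. I resolve this by monotonicity: tightening $\mcX_r$ from $\Omega(\rho_r)$ to the box in \cref{eq:Xr-defn} can only decrease $b_{f,\mcX_r}$ (and hence $b_{\check f,\check\mcX_r}$), so every inequality established above for the looser initial $\mcX_r$ remains valid after the redefinition, keeping the chain of bounds self-consistent.
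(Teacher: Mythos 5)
Your proposal is correct and follows essentially the same route as the paper's proof: the same scaling inequalities from $\infnorm{\Tau_x^i}=1$, the same choice $\check\rho_r^i=\rho_r$ to verify \cref{eq:l1-stability-condition-transformed}, the same application of \cref{lem:ref-xr-ur-bnd,lem:ref-id-bnd} to the transformed systems, and the same extraction of the $i$th component via $\Tau_x^i[i]=1$ (the paper additionally notes $b_{\check f,\check\mcX_r}=b_{f,\mcX_r}$ as an equality, but your one-sided inequality suffices). Your closing remark on the self-consistency of the redefined $\mcX_r$ is a reasonable addition that the paper handles implicitly through the iterative update in its design algorithm.
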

\begin{proof}
For any $\Tau_x^i$ satisfying \cref{eq:Tx-i-cts} with an arbitrary $i\in\Zn$, we have $\infnorm{\Tau_x^i}= 1$. Therefore, under the transformation \cref{eq:coordinate-trans}, considering \cref{eq:Hxm-Hxv-Gxm-check-defn,eq:check-rhoin-defn} and \cref{lem:L1-Linf-relation}, we have 
\begin{subequations}\label{eq:Hxm-Hxv-Gxm-check-original-relation}
\begin{align}
      \hspace{-2mm}   \linfnorm{\mcH_{\check xm}^i(s)} & \! \leq \! \infnorm{\Tau_x^i} \linfnorm{\mcH_{xm}(s)} \!=\! \linfnorm{\mcH_{xm}(s)}\!,  \label{eq:Hxm-check-original-relation} \\
   \hspace{-2mm}   \linfnorm{\mcH_{\check xv}^i(s)} & \! \leq \! \infnorm{\Tau_x^i}  \linfnorm{\mcH_{xv}(s)} \!=\!  \linfnorm{\mcH_{xv}(s)}\!, \label{eq:Hxv-check-original-relation}\\
\hspace{-2mm}  \linfnorm{\mcG_{\check xm}^i(s)} & \! \leq \!  \infnorm{\Tau_x^i}  \linfnorm{\mcG_{xm}(s)} \!=\!  \linfnorm{\mcG_{xm}(s)}\!, \label{eq:Gxm-check-original-relation} \\
\hspace{-2mm}   \check \rho_\textup{in}^i & \! \leq \! \infnorm{\Tau_x^i}  \rhoin \!=\! \rhoin. \label{eq:rhoin-check-original-relation}
\end{align}
\end{subequations}
It follows from \cref{lem:ref-xr-ur-bnd} that $\xr(t)\in \mcX_r$ for any $t\geq0$, which, together with \cref{eq:coordinate-trans}, implies $\check x_\rt(t)\in \check \mcX_r$ for any $t\geq0$, where $\check \mcX_r$ is defined via \cref{eq:check-Z-defn}. 
Considering \cref{eq:f-checkf-relation,eq:check-Z-defn}, for any compact set $\mcX_r$, we have
\begin{equation}\label{eq:b-checkf-f-equal}
    b_{\check f, \check \mcX_r} = b_{f, \mcX_r}.
\end{equation}
Now suppose that constants $\rho_r$ and $\linfnorm{v}$ satisfy \cref{eq:l1-stability-condition}.  
Then, due to \cref{eq:b-checkf-f-equal,eq:Hxm-Hxv-Gxm-check-original-relation}, with $\check \rho_r^i = \rho_r$ and the same $\linfnorm{v}$, \cref{eq:l1-stability-condition-transformed} is satisfied. 

Additionally, if \cref{eq:l1-stability-condition-transformed} holds, by applying \cref{lem:ref-xr-ur-bnd} to the transformed reference system \cref{eq:ref-system-transformed}, we obtain that $\linfnorm{\check x_\rt}\leq \check \rho_r^i$, implying that $\abs{\check x_{\rt,i}(t)}\leq \check \rho_r^i$ for any $t\geq 0$. Since $\check x_{\rt,i}(t) =x_{\rt,i}(t) $ due to the constraint \cref{eq:Tx-i-cts} on $\Tau_x^i$, we have \cref{eq:xr-i-bnd-from-trans}.
Equation \cref{eq:xr-i-bnd-from-trans} is equivalent to $\xr(t)\in\mcX_r$ for any $t\geq 0$, with the re-definition of $\mcX_r$ in \cref{eq:Xr-defn}. 
Following the proof of \cref{lem:ref-id-bnd}, one can obtain $   \left\| {\check x_\textup{r} - \check x_\nt} \right\|_{{\mathcal L}{_\infty }}  \leq  \lonenorm{\mcG_{ \check xm}} b_{\check f, \check \mcX_r}= \lonenorm{\mcG_{ \check xm}(s)}b_{f,\check \mcX_r}$, where  the equality is due to \cref{eq:b-checkf-f-equal}. Further considering  $\check x_{\rt,i}(t) =x_{\rt,i}(t) $ and $\check x_{\nt,i}(t) = x_{\nt,i}(t)$  due to the constraint \cref{eq:Tx-i-cts} on $\Tau_x^i$, we have \cref{eq:xri-xni-bnd-from-trans}.
\end{proof}
\begin{remark}
\cref{lem:ref-xr-ur-bnd} and \cref{lem:ref-id-bnd} imply  $\abs{x_{\rt,i}(t)}\leq \rho_r$ and $\abs{x_{\rt,i}(t)-x_{\nt, i}(t)}\leq \lonenorm{\mcG_{xm}}b_{f,\Omega(\rho_r)}$, respectively, for all $i\in\Zn$ and $t\geq 0$. \cref{lem:refine_bnd_xi_w_Txi} indicates that by applying the coordinate transformation \cref{eq:coordinate-trans} and leveraging the condition \cref{eq:l1-stability-condition-transformed} for the transformed system \cref{eq:ref-system-transformed}, one can obtain a tighter bound on $x_{\rt,i}(t)$ than $\rho_r$ and a tighter bound on $\abs{x_{\rt,i}(t)-x_{\nt, i}(t)}$ than $\lonenorm{\mcG_{xm}(s)}b_{f,\Omega(\rho_r)}$. 
\end{remark}
\noindent \textbf{Derive Separate Bounds for Adaptive Inputs}:
From  \cref{eq:ref-system-uar} and the structure with $\mcC(s)$  \cref{eq:filter-defn}, we can obtain  \begin{equation}
    u_{\rt,j}(s)=-\mcC_j(s) \laplace{f_j(t,\xr(t))}, \quad \forall j\in\Zm.
\end{equation}
Therefore, given a set $\mcX_r$ such that $\xr(t)\!\in\! \mcX_r$ for any $t\!\geq\! 0$, from \cref{assump:lipschitz-bnd-fi,lem:L1-Linf-relation} we get 
\begin{equation}\label{eq:uar-i-bnd}
 \abs{u_{\rt,j}(t)} \leq \lonenorm{\mcC_j(s)} b_{f_j, \mcX_r}, \quad \forall t\geq 0, \ \forall j\in\Zm.
\end{equation}

With the preceding preparations, we are ready to {derive an individual bound for $x_i(t)-x_{\nt,i}(t)$ ($i\in\Zn$) and $u_j(t)-u_{\nt,j}(t)$} ($j\in\Zm$), as stated in the following theorem.
\begin{theorem}\label{them:xi-uai-bnd}
Consider the uncertain system \eqref{eq:dynamics-uncertain} subject to \cref{assump:lipschitz-bnd-fi}, the nominal system \eqref{eq:nominal-cl-system}, and the \loneAC~defined via \cref{eq:state-predictor,eq:adaptive_law,eq:l1-control-law} subject to the conditions \cref{eq:l1-stability-condition-Lf,eq:l1-stability-condition} with constants $\rho_r$ and $\gamma_1>0$ and the sample time constraint \eqref{eq:T-constraint}. Suppose that for each $i\in \Zn$, 
\cref{eq:l1-stability-condition-transformed} holds with a constant $\check \rho_r^i$ for the transformed reference system \cref{eq:ref-system-transformed} obtained by applying \cref{eq:coordinate-trans}. Then, we have 
\begin{subequations}\label{eq:xui-xuni-bnd-from-trans-w-tilX-tilU-defn}
\begin{align}
\hspace{-2mm}     {x(t)-x_{\nt}(t)} \!\in\! \tilde \mcX\!\trieq\! \left\{z\!\in\!\mbR^n\!: \!\abs{z_i}\!\leq\! \tilde \rho^i, \  i\in \Zn \right\}\!,\  & \forall t\!\geq\! 0, \label{eq:xi-xni-bnd-from-trans-w-tilX-defn}\\
 \hspace{-2mm} {u_{\at}(t)} \!\in\! \mcU_\at  \!\trieq\! \left\{z\!\in\!\mbR^m\!: \!\abs{z_j}\!\leq\! \rho_{u_\at}^j, \ \! j\!\in\! \Zm \right \}\!,\ &\forall t\!\geq\! 0, \label{eq:ua-i-bnd-w-Ua-defn} \\ 
 \hspace{-2mm}        u(t) - u_{\nt}(t)\!\in\!  \tilde\mcU   \!\trieq\! \left \{z \!\in\! \mbR^m\!: \!\abs{z_j}\!\leq\!  \tilde\rho_u^j,
\  j\!\in\! \Zn \right \}\!,  \ & \forall t\!\geq\!0, \label{eq:uj-unj-bnd-w-tilU-defn} 
\end{align}
\end{subequations}where 
\begin{subequations}\label{eq:rhoi-tilrhoi-y-defn}
\begin{align}
 \hspace{-2mm}      \rho^i &  \!\trieq\! \check \rho_r^i\!+\!\gamma_1, \ \tilde \rho^i\!\trieq\! \lonenorm{\mcG_{ \check xm}^i(s)}\!\!b_{f,\mcX_r}\!+\!\gamma_1,\label{eq:rhoi-tilrhoi-defn}
\\
\hspace{-2mm}\rho_{u_\at}^j & \!\trieq\! \lonenorm{\mcC_j(s)} b_{f_j, \mcX_r}\! +\! \gamma_2,\   \tilde\rho_u^j \!\trieq \!  \rho_{u_\at}^j \!+\!
\sum_{i=1}^n\abs{K_x[j,i]}\tilde \rho^i,
\label{eq:tilrho-u-j-defn}
\end{align}
\end{subequations}
with $\mcX_r$ defined in \cref{eq:Xr-defn}, and $C[j,i]$ denoting the $(j,i)$ element of $C$.

\end{theorem}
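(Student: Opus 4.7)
The plan is to establish each of the three containments in \cref{eq:xui-xuni-bnd-from-trans-w-tilX-defn} componentwise using the triangle inequality, assembling three pieces that are already in place: \cref{them:x-xref-bnd} for the adaptive-vs-reference gap, \cref{lem:refine_bnd_xi_w_Txi} for the scaled reference-vs-nominal gap in each coordinate, and the per-channel bound \cref{eq:uar-i-bnd} on the reference adaptive input. The hypothesis that \cref{eq:l1-stability-condition-transformed} holds for every $i \in \Zn$ is included precisely so that \cref{lem:refine_bnd_xi_w_Txi} can be invoked coordinatewise.

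First, for \cref{eq:xi-xni-bnd-from-trans-w-tilX-defn}, I would write $x_i(t) - x_{\nt,i}(t) = \bigl(x_i(t) - x_{\rt,i}(t)\bigr) + \bigl(x_{\rt,i}(t) - x_{\nt,i}(t)\bigr)$ and apply the triangle inequality. The first term is bounded in absolute value by $\gamma_1$ because \cref{eq:xref-x-bnd} gives $\linfnorm{x_\rt - x} \leq \gamma_1$, which is a vector-$\infty$-norm bound and therefore also a componentwise bound. The second term is bounded by $\lonenorm{\mcG_{\check xm}^i(s)}\, b_{f,\mcX_r}$ by \cref{eq:xri-xni-bnd-from-trans} of \cref{lem:refine_bnd_xi_w_Txi}, with $\mcX_r$ given by the redefinition \cref{eq:Xr-defn}. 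Summing the two yields $\abs{x_i(t) - x_{\nt,i}(t)} \leq \tilde \rho^i$ as in \cref{eq:rhoi-tilrhoi-defn}.

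Second, for \cref{eq:ua-i-bnd-w-Ua-defn}, I would decompose $u_{\at,j}(t) = \bigl(u_{\at,j}(t) - u_{\rt,j}(t)\bigr) + u_{\rt,j}(t)$. The first piece is bounded componentwise by $\gamma_2$ from \cref{eq:uref-u-bnd}. The second piece is controlled by \cref{eq:uar-i-bnd}, which exploits the diagonal structure of $\mcC(s)$ in \cref{eq:filter-defn} to produce the per-channel estimate $\abs{u_{\rt,j}(t)} \leq \lonenorm{\mcC_j(s)}\, b_{f_j,\mcX_r}$; here the tighter set $\mcX_r$ from \cref{eq:Xr-defn} is valid because \cref{eq:xr-i-bnd-from-trans} guarantees $\xr(t) \in \mcX_r$ for all $t\geq 0$. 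Together these give $\abs{u_{\at,j}(t)} \leq \rho_{u_\at}^j$. Finally, for \cref{eq:uj-unj-bnd-w-tilU-defn}, I would invoke the control decomposition \cref{eq:u-un-defn}, which reads $u(t) - \un(t) = K_x\bigl(x(t) - \xn(t)\bigr) + u_\at(t)$. Taking the $j$th coordinate and applying the triangle inequality gives
\begin{equation*}
\abs{u_j(t) - u_{\nt,j}(t)} \leq \sum_{i=1}^{n} \abs{K_x[j,i]}\, \abs{x_i(t) - x_{\nt,i}(t)} + \abs{u_{\at,j}(t)},
\end{equation*}
and substituting the already-established bounds $\tilde\rho^i$ and $\rho_{u_\at}^j$ yields $\abs{u_j(t) - u_{\nt,j}(t)} \leq \tilde \rho_u^j$ exactly as in \cref{eq:tilrho-u-j-defn}.

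There is no substantial obstacle: the theorem is an assembly result that combines the scaled reference-nominal bound of \cref{lem:refine_bnd_xi_w_Txi}, the per-channel reference-input bound \cref{eq:uar-i-bnd}, and the adaptive-vs-reference bounds of \cref{them:x-xref-bnd}. The only care required is the logical ordering: the bound $\xr(t)\in\mcX_r$ used to invoke \cref{eq:uar-i-bnd} must come from \cref{lem:refine_bnd_xi_w_Txi} first, so there is no circularity, and the standing hypothesis on \cref{eq:l1-stability-condition-transformed} for each $i$ must be verified by the user when choosing the scaling matrices $\Tau_x^i$.
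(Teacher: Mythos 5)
Your proposal is correct and follows essentially the same route as the paper's proof: a componentwise triangle-inequality assembly of \cref{lem:refine_bnd_xi_w_Txi}, \cref{them:x-xref-bnd}, \cref{eq:uar-i-bnd}, and the decomposition \cref{eq:u-un-defn}. The remark on the non-circular ordering (establishing $\xr(t)\in\mcX_r$ before invoking \cref{eq:uar-i-bnd}) is a sensible clarification but does not change the argument.
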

\begin{proof}
For each $i\in \Zn$, \cref{lem:refine_bnd_xi_w_Txi} implies 
$\abs{x_{\rt,i}(t)} \leq \check \rho_r^i$ and $\abs{x_\textup{r,i}(t)-x_{\nt,i}(t)} \leq \lonenorm{\mcG_{ \check xm}^i(s)}b_{f,\mcX_r}$ for all  $t\geq0$.  On the other hand, \cref{them:x-xref-bnd} indicates that $\abs{x_{\rt,i}(t) - x_{i}(t) } \leq \gamma_1$ for any $t\geq 0$ and any $i\in \Zn$. Therefore, \cref{eq:xi-xni-bnd-from-trans-w-tilX-defn} is true. 
On the other hand, \cref{them:x-xref-bnd} indicates that $\abs{u_{\rt,j}(t) - u_{\at,j}(t) } \leq \gamma_2$ for any $t\geq 0$ and any $j\in\Zm$, which, together with \cref{eq:uar-i-bnd}, leads to \cref{eq:ua-i-bnd-w-Ua-defn}. Finally, \cref{eq:uj-unj-bnd-w-tilU-defn} follows from \cref{eq:u-un-defn,eq:ua-i-bnd-w-Ua-defn,eq:xi-xni-bnd-from-trans-w-tilX-defn}. The proof is complete. 
\end{proof}
\begin{remark}\label{rem:x-xid-discussion-Cs-T-refined}
\cref{them:xi-uai-bnd} provides a way to derive an individual bound on $x_i(t)$, and $x_i(t)-x_{\nt,i}(t)$ for each $i\in \Zn$ and on $u_j (t) - u_{\nt,j}(t)$ for each $j\in\Zm$ via coordinate transformations. 
Additionally, similar to the arguments in \cref{rem:x-xid-discussion-Cs-T}, by decreasing $T$ and increasing the bandwidth of the filter $\mcC(s)$, one can make $\tilde \rho^i$ ($i\in \Zn$) arbitrarily small, i.e., making {\it the states of the adaptive system arbitrarily close to those of the nominal system}, and make the bounds on $u_{\at,j}(t)$ and $u_j(t)-u_{\nt,j}(t)$ arbitrarily close to the bound on the true uncertainty $f_j(t,x)$ for $x\in\mcX_a$, for each $j\in\Zm$.
\end{remark}
According to \cref{them:x-xid-bnd,them:xi-uai-bnd}, the procedure for designing an \loneAC~with separate bounds on states and adaptive inputs can be summarized in \cref{alg:l1ac-design-w-sep-bnds}. 
\begin{algorithm}[h]
 \caption{Designing an \loneAC~with separate bounds}\label{alg:l1ac-design-w-sep-bnds}
\begin{algorithmic}[1]
\Require{uncertain system \cref{eq:dynamics-uncertain} subject to \cref{assump:lipschitz-bnd-fi}, initial parameters $A_e$, $\mcC(s)$ and $T$ to define an \loneAC,  $\gamma_1$, $\mcX_0$, $\linfnorm{v}$, tol}
\Procedure{DecideFilterUncertBnd}{$\mcC(s)$,$\gamma_1$,$\mcX_0$,$\linfnorm{v}$}\label{procedure:l1ac-design}
\While{condition \cref{eq:l1-stability-condition} or \cref{eq:l1-stability-condition-Lf} does not hold} \label{line:l1-stability-conditions-in-l1ac}
\State Increase the bandwidth of $\mcC(s)$ \Comment{See \cref{rem:Gxm_linfnorm-stability-filter}.} 
\EndWhile \Comment{$\rho_r$, $\mcX_r \!=\! \Omega(\rho_r)$ and $b_{f,\mcX_r}$ will be computed.}
\EndProcedure
\State Set $b_{f,\mcX_r}^{old} = b_{f,\mcX_r}$ \label{line:b_fXr-old}
\Procedure{DeriveSepStateBnds}{$b_{f,\mcX_r}$,$\gamma_1$,$\mcC(s)$,$\mcX_0$,$\linfnorm{v}$}\label{procedure:sep-state-bnds}
\For{$i=1,\dots,n$}
\State Select $\Tau_x^i$ satisfying\!~\cref{eq:Tx-i-cts} and apply the transformation\!~\cref{eq:coordinate-trans}\label{line:Tx_i-selection}
\State Evaluate \cref{eq:Hxm-Hxv-Gxm-check-defn} and 
compute $\check \rho_\textup{in}^i$ according to \cref{eq:check-rhoin-defn}
\State Compute $\check \rho_{r}^i$ that satisfies \cref{eq:l1-stability-condition-transformed} \Comment{Such a $\check \rho_r^i\leq \rho_r$ is guaranteed to exist from \cref{lem:refine_bnd_xi_w_Txi}}
\State Set $\rho^i =  \check \rho_r^i+\gamma_1$, $\tilde \rho^i = \lonenorm{\mcG^i_{ \check xm}(s)}b_{f,\mcX_r}+\gamma_1$
\EndFor
\EndProcedure
\State Set $\mcX_r\! = \!\left\{\!z\!\in\!\mbR^n\!:\! \abs{z_i}\!\leq\! \check\rho_r^i\right\}$ and update $b_{f,\mcX_r}$\label{line:Xr-update-in-l1ac} 

\If{$b_{f,\mcX_r}^{old} - b_{f,\mcX_r}>\textup{tol}$}
\State Set $b_{f,\mcX_r}^{old} = b_{f,\mcX_r}$ and go to step~\ref{procedure:sep-state-bnds}
\EndIf
\State Set $\rho = \max_{i\in\Zn} \rho^i$ and compute $\mcX_a$ via \cref{eq:Xr-Xa-defn}\label{line:Xa-update-in-l1ac} 
\Procedure{DeriveSepInputBnds}{$\mcX_r,\{\tilde \rho^i\}_{i\in\Zn},\gamma_2,\mcC(s)$}\label{procedure:sep-input-bnds}
\For{$j=1,\dots,m$}
\State Compute $\rho_{u_\at}^j$ and $\tilde \rho_u^j$ according to \cref{eq:tilrho-u-j-defn}
 \EndFor 
 \EndProcedure
 \Procedure{DecideSampleTime}{$A_e$,$\mcC(s)$,$T$,$\mcX_a$} \label{line:decide-sample-time}
\While{constraint \cref{eq:T-constraint} does not hold}
\State Decrease $T$ \Comment{Small $T$ can enforce \cref{eq:T-constraint} due to \cref{eq:gammaT-to-0}.}
\EndWhile
\EndProcedure
\Ensure{An \loneAC~defined by \cref{eq:l1-control-law,eq:adaptive_law,eq:state-predictor}~with parameters $A_e$ and $\mcC(s)$ and $T$, $\rho^i$ and $\tilde \rho^i$ for $i
\in \Zn$, $\rho_{u_\at}^j$ and $\tilde \rho_{u}^j$ for $j \in \Zm$ } 
\end{algorithmic}
\end{algorithm}
\begin{remark}
One can try different $\Tau_x^i$ in step~\ref{line:Tx_i-selection} of \cref{alg:l1ac-design-w-sep-bnds} and select the one that yields the tightest bound for the $i$th state.
\end{remark}

\begin{remark}\label{rem:conservatism-in-Cs-T}
The conditions \cref{eq:T-constraint,eq:l1-stability-condition-w-extra-Lf} can be quite conservative for some systems, due to the frequent use of inequalities related to the \lonew norm (stated in \cref{lem:L1-Linf-relation}), Lipschitz continuity and matrix/vector norms. As a result, the bandwidth of the filter $\mcC(s)$ computed via \cref{eq:l1-stability-condition-w-extra-Lf} could be unnecessarily high, while the sample time $T$ computed via \cref{eq:T-constraint} under a given $\gamma_1$ could be unnecessarily small. Based on our experience, assuming that some bounds $\tilde \rho^i$ ($i\in\Zn$) and $\tilde \rho_u^j$ ($j\in\Zm$) satisfying \cref{eq:xui-xuni-bnd-from-trans-w-tilX-tilU-defn} are derived under a specific filter $\mcC^\star(s)$ and $T^\star$ that satisfy \cref{eq:T-constraint,eq:l1-stability-condition-w-extra-Lf}, those bounds will most likely be respected {\it in simulations} even if we decrease the bandwidth of $\mcC^\star(s)$ by $1\sim3$ times and/or increase $T^\star$ by $1\sim 10$ times.
\end{remark}

 \section{\loneRG: Adaptive Reference Governor for Constrained Control Under Uncertainties}\label{sec:l1rg}
 Leveraging the uniform bounds on state and input errors guaranteed by the \loneAC, we now integrate the \loneAC~and the RG introduced in \cref{sec:sub-rg-nominal} to synthesize the \loneRG~framework for simultaneously enforcing the constraints \cref{eq:constraints} and improving the tracking performance.  
\subsection{\loneRG~Design}\label{sec:sub-l1rg}
We first make the following  assumption.
\begin{assumption}\label{assump:X0-Xn-xn}
$\hat \mcX_\nt$ and $\hat \mcU_\nt$ defined by \cref{eq:Xn-Un-defn}, \cref{eq:Xn-hat-Un-hat-defn} and \cref{eq:xui-xuni-bnd-from-trans-w-tilX-tilU-defn}  are nonempty. Furthermore, there exists a known command $v(0)$ such that 
\begin{equation}\label{eq:x0-0-in-tilOinf}
    (v(0),x_0)\in \tilOinf,
\end{equation}
where $\tilOinf$ is defined in \cref{eq:tilOinf-defn}. \end{assumption}
\begin{remark}
Considering \cref{eq:tilOinf-defn},  \cref{eq:x0-0-in-tilOinf} implies $x_0\in \hat \mcX_\nt$ and  $u_\nt(0) =K_xx_0+K_vv(0)\in \hat \mcU_\nt$ (since $\xn(0)=x_0$) where $\hat \mcX_\nt$ and $\hat \mcU_\nt$, according to \cref{eq:Xn-hat-Un-hat-defn}, are tightened versions of $\mcX_\nt$ and $\mcU_\nt $ that are defined in \cref{eq:Xn-Un-defn}. 
From \cref{rem:x-xid-discussion-Cs-T-refined}, with a sufficiently high bandwidth for $\mcC(s)$ and sufficiently small $T$, one can make $\mcX_\nt$ arbitrarily close to $\mcX$,
 and make $\tilde \mcU$ arbitrarily close to the bound set for the true uncertainty in $\mcX$. Additionally, as mentioned in 
 \cref{rem:Xn-hat-Xn-relation-w-Td}, $\hat \mcX_\nt$ and $\hat \mcU_\nt$ are close to $\mcX_\nt$ and $ \mcU_\nt$, respectively, when $T_d$ is small. As a result, with a sufficiently high bandwidth for $\mcC(s)$, and sufficiently small $T$ and $T_d$, \cref{assump:X0-Xn-xn} roughly states that the initial state stays in $\mcX$, and the constraint set $\mcU$ is sufficiently large to ensure enough control authority for tracking an initial reference command $v(0)$ and additionally for compensating the uncertainty in $ \mcX$.
\end{remark}
Under the preceding assumption, the design procedure for \loneRG~is summarized in \cref{alg:l1rg}.  
Compared to step~\ref{line:l1-stability-conditions-in-l1ac} of \cref{alg:l1ac-design-w-sep-bnds},
we additionally constrain $\xr(t)$ and $x(t)$  to stay in $\mcX$ for all $t\geq0$
in step~\ref{line:l1-stability-conditions-in-l1rg} of \cref{alg:l1rg}. 
Such constraints 
can potentially limit the size of uncertainties that need to be compensated and significantly reduce the conservatism of the proposed solution. 
\begin{algorithm}
\caption{\loneRG~Design}\label{alg:l1rg}
\begin{algorithmic}[1]
\Require{An continuous-time uncertain system \cref{eq:dynamics-uncertain} subject to \cref{assump:lipschitz-bnd-fi}, constraint sets $\mcX$ and $\mcU$ as in \cref{eq:constraints}, $\mcX_0$, $\mcV$ (admissible set for $v(t)$), baseline control law in \cref{eq:baseline-control-law}, initial parameters $A_e$, $\mcC(s)$ and $T$ to define an \loneAC, $\gamma_1$, $T_d$ and $\epsilon$ for RG design, tol}
\Procedure{\loneAC-DesignUnderConstraints}{}
\State Compute $\linfnorm{v}$ given $\mcV$
\While{\cref{eq:l1-stability-condition} with $\mcX_r = \Omega(\rho_r) \cap \mcX$ or \cref{eq:l1-stability-condition-Lf} with $\mcX_a = \Omega(\rho_r+\gamma_1) \cap \mcX$
does not hold with any $\rho_r$}\label{line:l1-stability-conditions-in-l1rg}
\State Increase the bandwidth of $\mcC(s)$ \Comment{See \cref{rem:Gxm_linfnorm-stability-filter}.} 
\EndWhile \Comment{$\mcX_r$ and $b_{f,\mcX_r}$  will be computed.} \label{line:Xr-b_f-under-csts}
\State Set $b_{f,\mcX_r}^{old} = b_{f,\mcX_r}$
\State Run \textbf{\sc DeriveSepStateBnds}  of \cref{alg:l1ac-design-w-sep-bnds} with  $b_{f,\mcX_r}$, 
and obtain $\check \rho_r^i$
$\rho^i$ and $\tilde \rho^i$ for $i\in \Zn$\label{line:run-sep-state-bnds-in-l1rg}\label{line:derive-sep-state-bnds-in-l1rg}
\State Set $\mcX_r\! = \!\left\{\!z\!\in\!\mbR^n\!:\! \abs{z_i}\!\leq\! \check\rho_r^i\right\}\cap \mcX$ and update $b_{f,\mcX_r}$ \label{line:Xr-update-in-l1rg}
\If{$b_{f,\mcX_r}^{old} - b_{f,\mcX_r}>\textup{tol}$}
\State Set $b_{f,\mcX_r}^{old} = b_{f,\mcX_r}$ and go to step~\ref{procedure:sep-state-bnds}
\EndIf
\State Set $\mcX_a = \{ z\in\mbR^n: \abs{z_i}\leq \rho^i,\ i\in\Zn\}\cap \mcX$ \label{line:Xa-update-in-l1rg}
\State Run \textbf{\sc DeriveSepInputBnds}  of \cref{alg:l1ac-design-w-sep-bnds} with $\mcC(s)$ from step~\ref{line:Xr-b_f-under-csts}  and $\mcX_r$ from step~\ref{line:Xr-update-in-l1rg}, and obtain $ \rho_{u_\at}^j$ and $\tilde \rho_u^j$ for $j\in \Zm$ \label{line:derive-sep-input-bnds-in-l1rg}
\State Run \textbf{\sc DecideSampleTime}  of \cref{alg:l1ac-design-w-sep-bnds} with $\mcC(s)$ from step~\ref{line:Xr-b_f-under-csts}  and $\mcX_a$ from step~\ref{line:Xa-update-in-l1rg}, and obtain $T$
\State Compute $\tilde \mcX$ and $\tilde \mcU$ with $\{\tilde \rho^i\}_{i\in\Zn}$ and $\{\tilde \rho_u^j\}_{j\in\Zm}$ 
via \cref{eq:xui-xuni-bnd-from-trans-w-tilX-tilU-defn}
\EndProcedure
\Procedure{RG-Design}{}\label{line:rg-design}
\State Compute $\mcX_\nt$ and $\mcU_\nt$ with $\mcX$,~$\mcU$,~$\tilde\mcX$ and $\tilde\mcU$ via \cref{eq:Xn-Un-defn} 
\State Formulate the nominal discrete-time model \cref{eq:nominal-system-discrete-w-un} with the sample time $T_d$ 
\State Compute $\hat \mcX_\nt$ and $\hat \mcU_\nt$ via \cref{eq:Xn-hat-Un-hat-defn} \hspace{-4mm}  \Comment{With a small $T_d$, one may set $\hat \mcX_\nt \!= \!\mcX_\nt$, $\hat \mcU_\nt \!= \! \mcU_\nt $ for practical implementation.}
\State Compute the set $\tilOinf$ dependent on $\epsilon$ according to \cref{eq:tilOinf-defn}
\EndProcedure
\Ensure{An \loneRG~consisting of a RG designed for the nominal system \cref{eq:nominal-cl-system} and an \loneAC~to compensate for uncertainties}
\end{algorithmic}
\end{algorithm}

We are ready to state the guarantees regarding tracking performance and constraint enforcement provided by \loneRG.
\begin{theorem}\label{them:xi-uai-bnd-l1rg}
Consider an uncertain system \eqref{eq:dynamics-uncertain} subject to \cref{assump:lipschitz-bnd-fi} and the state and control constraints in \cref{eq:constraints}.  
Suppose that an \loneAC~(defined by \cref{eq:l1-control-law,eq:adaptive_law,eq:state-predictor}) and a RG 
are designed by following \cref{alg:l1rg}.  
If \cref{assump:X0-Xn-xn} hold,
then, under the baseline control law \cref{eq:baseline-control-law} and the \loneRG~consisting of the compositional control law \cref{eq:total-control-law}, the \loneAC~and the RG for computing the reference command $v(t)$ according to \cref{eq:v-t-kTd-relation,eq:opt-prob-rg}, we have 
\begin{align}
    x(t)\in \tint(\mcX), \  u(t) \in \tint(\mcU),\quad &\forall t\geq 0,  \label{eq:constraints-xu} \\
    x(t)-x_{\nt}(t) \in  \tint(\tilde \mcX),\quad & \forall t\geq 0,\label{eq:xi-xni-bnd-under-l1rg}
    \\
    y(t)-y_{\nt}(t)\in \{z\in\mbR^m: \abs{z_i}\leq \tilde \rho_y^j\},\quad  &\forall t\geq 0,  
   \label{eq:yi-yni-bnd-under-l1rg}
\end{align}
where $\xn(t)$ and $\yn(t)$ are the states and outputs of the nominal system \cref{eq:nominal-cl-system-w-un} under the reference command input $v(t)$, and 
\begin{equation}\label{eq:tilrhoi-y-defn}
    \tilde \rho_y^j  \!\trieq\! \sum_{i=1}^n \abs{C[j,i]}\tilde\rho^i,\quad \forall j\in \Zm. 
\end{equation}

\begin{proof}
Equation \cref{eq:x0-0-in-tilOinf} in \cref{assump:X0-Xn-xn} implies $(\mathrm v(0),\mathrm  x_\nt(0))\in \tilOinf$ (due to $\mathrm \xn(0)=x_0$),
and $\mathrm u_\nt(0) =K_x\mathrm \xn(0)+K_v\mathrm  v(0)\in \hat \mcU_\nt$. Thus, 
the reference command $\mathrm v(k)$ produced by  \cref{eq:opt-prob-rg} ensures  $\mathrm x_\nt(k)\in\hat \mcX_\nt$  and $\mathrm u_\nt(k)\in\hat\mcU_\nt$ for all $k\in \mbZ_+$, which, due to \cref{lem:nominal-system-inter-sample-behavior}, implies
\begin{equation}\label{eq:constraints-nom-in-l1rg-sec}
    \xn(t) \in \mcX_\nt, \ u_\nt(t)\in \mcU _\nt,\quad \forall t\geq 0.
\end{equation}
Compared to \cref{line:l1-stability-conditions-in-l1ac,line:Xr-update-in-l1ac,line:Xa-update-in-l1ac} of \cref{alg:l1ac-design-w-sep-bnds}, we restrain $\mcX_r$ and $\mcX_a$ to be subsets of $\mcX$ in \cref{line:l1-stability-conditions-in-l1rg,line:Xr-update-in-l1rg,line:Xa-update-in-l1rg} of  \cref{alg:l1rg}. As a result,  {\it if}  \cref{eq:constraints-xu} and 
\begin{equation}\label{eq:xr-in-X}
\xr(t)\in\mcX,\quad \forall t\geq 0,
\end{equation}
jointly hold, condition \cref{eq:xi-xni-bnd-under-l1rg} holds according to \cref{them:xi-uai-bnd}, while \cref{eq:xri-xni-bnd-from-trans} holds according to 
\cref{lem:refine_bnd_xi_w_Txi}.


We next prove \cref{eq:constraints-xu,eq:xr-in-X} by contradiction. Assume \cref{eq:constraints-xu} or \cref{eq:xr-in-X}  do not hold. The initial condition \cref{eq:x0-0-in-tilOinf} implies that 
$x_0\in \mcX_\nt\subset \tint(\mcX)$ and $u_\nt(0) \in \mcU_\nt\subset \tint(\mcU)$. 
As a result, we have $x(0)\in\mcX$, $\xr(0)\in \mcX$ and $u(0)\in \mcU$. Since $x(t)$, $\xr(t)$ and $u(t)$ are continuous, there must exist a time instant $\tau$, such that \begin{subequations}\label{eq:x-0-tau-w-tau}
\begin{align} 
\hspace{-3mm}x(t) &\!\in\! \textup{int}(\mcX),\ \xr(t)\!\in\! \textup{int}(\mcX)  \textup{ and } u(t)\!\in\! \textup{int}(\mcU), \ \forall t \!\in\! [0,\tau)  \label{eq:x-0-tau} \\
 \hspace{-3mm}   x(\tau) & \!\in\! \textup{bnd}(\mcX) \textup{ or }  \xr(\tau) \!\in\! \textup{bnd}(\mcX)  \textup{ or }  u(\tau) \!\in\! \textup{bnd}(\mcU). \label{eq:x-tau]}
\end{align}    
\end{subequations}
Now consider the interval $[0,\tau]$. According to \cref{lem:refine_bnd_xi_w_Txi}, due to \cref{eq:x-0-tau-w-tau} and the definitions in \cref{eq:xi-xni-bnd-from-trans-w-tilX-defn,eq:rhoi-tilrhoi-defn}, we have $\xr(t)-\xn(t) \in \{z\in\mbR^n: \abs{z_i}\leq \lonenorm{\mcG_{ \check xm}(s)}b_{f,\mcX_r}\}\subset \tint(\tilde \mcX)$, which, together with \cref{eq:Xn-Un-defn,eq:constraints-nom-in-l1rg-sec}, implies 
\begin{equation}\label{eq:xr-in-Xinterior-under-l1rg}
    \xr(t)\in\tint(\mcX),\quad \forall t\in [0,\tau].
\end{equation}
Similarly, according to \cref{them:xi-uai-bnd}, due to \cref{eq:x-0-tau-w-tau} and the definition in \cref{eq:xui-xuni-bnd-from-trans-w-tilX-tilU-defn}, we have $
    x(t)-\xn(t)  \in \interior{\tilde \mcX}$, and $u(t)- \un(t) \in \interior{\tilde \mcU}$, for any $ t \in [0,\tau]$, 
which, together with \cref{eq:Xn-Un-defn,eq:constraints-nom-in-l1rg-sec}, implies
\begin{equation}\label{eq:xu-in-XUinterior-under-l1rg}
   x(t)\in \interior{\mcX},\ u(t) \in \interior{\mcU}, \quad \forall t\in [0,\tau].
\end{equation} Both \cref{eq:xr-in-Xinterior-under-l1rg} and \cref{eq:xu-in-XUinterior-under-l1rg} contradict  \cref{eq:x-tau]}, which proves  \cref{eq:constraints-xu,eq:xr-in-X}. By applying the inference right before \cref{eq:xu-in-XUinterior-under-l1rg} again for $t\geq 0$, we obtain \cref{eq:xi-xni-bnd-under-l1rg}, which, together with $y_j(t)-y_{\nt,j}(t) = \sum_{i=1}^nC[j,i]\left(x_i(t)-x_{\nt,i}(t)\right)$, leads to \cref{eq:yi-yni-bnd-under-l1rg}.\end{proof}
\end{theorem}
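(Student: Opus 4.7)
The plan is to combine the reference governor feasibility/invariance machinery of \cref{sec:sub-rg-nominal} with the separate $\mathcal{L}_1$AC bounds of \cref{them:xi-uai-bnd}, closing a circular dependency through a continuity/contradiction argument.

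First I would exploit \cref{assump:X0-Xn-xn}: since $(\mathrm v(0),x_0)\in\tilOinf$ and $\tilOinf$ is positively invariant, the scalar optimization \cref{eq:opt-prob-rg} is recursively feasible and yields $(\mathrm v(k),\mathrm\xn(k))\in\tilOinf$ for every $k\in\mbZ_+$. By the definition of $\tilOinf$ in \cref{eq:Oinf-defn,eq:tilOinf-defn}, this gives $\mathrm\xn(k)\in\hat\mcX_\nt$ and $\mathrm\un(k)\in\hat\mcU_\nt$ at every sampling instant. Applying \cref{lem:nominal-system-inter-sample-behavior} extends this to continuous time, so $\xn(t)\in\mcX_\nt$ and $\un(t)\in\mcU_\nt$ for all $t\geq 0$.

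The main obstacle is that \cref{them:xi-uai-bnd} only delivers the tube bounds $x(t)-\xn(t)\in\tilde\mcX$, $u(t)-\un(t)\in\tilde\mcU$ provided the stability conditions \cref{eq:l1-stability-condition-w-extra-Lf} and \cref{eq:l1-stability-condition-transformed} hold on sets that \cref{alg:l1rg} has intersected with $\mcX$; hence these bounds are only valid while $x(t)$ and $\xr(t)$ remain in $\mcX$, which is precisely what we want to conclude. I would break this loop by contradiction. Assume the conclusion \cref{eq:constraints-xu} together with $\xr(t)\in\mcX$ fails. The initial condition \cref{eq:x0-0-in-tilOinf} and \cref{eq:Xn-Un-defn} place $x(0)$, $\xr(0)=x_0$, and $u(0)$ strictly inside $\mcX$ and $\mcU$, so continuity of $x$, $\xr$, $u$ yields a first exit time $\tau>0$ at which one of them reaches the boundary while all three lie in the respective interiors on $[0,\tau)$. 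On $[0,\tau]$ the hypotheses of \cref{them:xi-uai-bnd} and \cref{lem:refine_bnd_xi_w_Txi} are satisfied, so $x(t)-\xn(t)\in\tint(\tilde\mcX)$, $\xr(t)-\xn(t)\in\tint(\tilde\mcX)$, and $u(t)-\un(t)\in\tint(\tilde\mcU)$ on this interval. Combining with $\xn(t)\in\mcX_\nt=\mcX\ominus\tilde\mcX$ and $\un(t)\in\mcU_\nt=\mcU\ominus\tilde\mcU$ forces $x(\tau)$, $\xr(\tau)\in\tint(\mcX)$ and $u(\tau)\in\tint(\mcU)$, contradicting the definition of $\tau$. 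Therefore \cref{eq:constraints-xu} and $\xr(t)\in\mcX$ hold for all $t\geq 0$.

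With global interior containment in hand, \cref{them:xi-uai-bnd} applies on $[0,\infty)$ and immediately gives \cref{eq:xi-xni-bnd-under-l1rg}. Finally, \cref{eq:yi-yni-bnd-under-l1rg} follows from the linear output map: for each $j\in\Zm$,
\begin{equation}
\abs{y_j(t)-y_{\nt,j}(t)} = \abs{\sum_{i=1}^n C[j,i]\bigl(x_i(t)-x_{\nt,i}(t)\bigr)} \leq \sum_{i=1}^n \abs{C[j,i]}\tilde\rho^i = \tilde\rho_y^j,
\end{equation}
where the componentwise bound $\abs{x_i(t)-x_{\nt,i}(t)}\leq\tilde\rho^i$ is exactly \cref{eq:xi-xni-bnd-under-l1rg}. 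Everything beyond the contradiction step is routine; the contradiction/continuity argument is the one nontrivial ingredient and it is what rigorously justifies using the $\mathcal{L}_1$AC bounds that were derived under the a priori assumption $x(t)\in\mcX$.
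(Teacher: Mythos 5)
Your proposal is correct and follows essentially the same route as the paper's own proof: recursive feasibility of the RG plus \cref{lem:nominal-system-inter-sample-behavior} to pin the nominal trajectory inside $\mcX_\nt\times\mcU_\nt$, then a first-exit-time contradiction argument to break the circularity between the $\lone$AC tube bounds (valid only while $x$ and $\xr$ stay in $\mcX$) and the constraint satisfaction they are meant to establish, and finally the triangle inequality on the linear output map for \cref{eq:yi-yni-bnd-under-l1rg}. The only cosmetic difference is that you make the circular dependency and its resolution more explicit than the paper does, which is a presentational improvement rather than a different argument.
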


\section{Simulation Results}\label{sec:sim}
We now apply \loneRG~to the longitudinal dynamics of an F-16 aircraft. The model was adapted from  \cite{sobel1985design-pitch} with slight modifications to remove the actuator dynamics, in which the state vector $x(t) = [\gamma(t), q(t), \alpha(t)]^\top$ consists of the flight path angle, pitch rate and angle of attack, and the control input vector $u(t)=[\delta_e(t), \delta_f(t)]$ includes the elevator deflection and flaperon deflection. The output vector is $y(t) = [\theta(t),\gamma(t)]^\top$, where
$\theta(t) = \gamma(t) + \alpha(t)$ is the pitch angle; the reference input vector is $r(t) = [\theta_c(t),\gamma_c(t)]^\top$, where $\theta_c$ and $\gamma_c$ are the commanded pitch
angle and flight path angle, respectively.
The system is subject to state and control constraints:
\begin{equation}\label{eq:cts-F16}
   \abs{\alpha(t)} \le 4 \!\textup{ deg},\   \abs{\delta_e(t)} \le 25 \!\textup{ deg}, \  \abs{\delta_f(t)} \le 22 \! \textup{ deg},  
\end{equation}
where the state constraint can also be represented as $x(t)\in\mcX\trieq [-10^3,10^3]\times [-10^3,10^3]\times[-4,4]$ following the convention in \cref{eq:constraints}. Furthermore, we assume 
\begin{equation}\label{eq:F16-init}
    \linfnorm{r}\leq 10, \quad x(0)\in \mcX_0 = \Omega(0.1).
\end{equation}
The open-loop dynamics are given by 
 { \setlength{\arraycolsep}{2pt}
\begin{align}
  \hspace{-1mm}  \dot x & \!=\! \begin{bmatrix}
    0 & 0.0067& 1.34 \\
    0 & -0.869 & 43.2 \\
    0 & 0.993 & -1.34
    \end{bmatrix} \!x+ 
    \begin{bmatrix}
    0.169 & 0.252 \\
    -17.3& -1.58 \\
    -0.169 & -0.252 
    \end{bmatrix}\! (u\!+\!f(t,x)), \label{eq:ol-dynamics-f16}
\end{align}}where $f(t,x) = [-0.8\sin(0.4\pi t)-0.1\alpha^2, 0.1-0.2\alpha]^\top$ is the uncertainty  dependent on both time and $\alpha$.
The feedback and feedforward gains of the baseline controller \cref{eq:baseline-control-law} are selected to be $K_x = [3.25,0.891,7.12;-6.10,-0.898,-10.0]$ and $K_v = [-3.93 ,0.679; 2.57, 3.53]$. Via simple calculations, we can see that $f(t,x) \in \mathcal W=[-2.4,2.4]\times [-0.9,0.9]$ when $x\in\mcX$ holds. 
\vspace{-4mm}
\subsection{\loneRG~Design}\label{sec:sub-l1rg-f16}
It can be verified that given any set $\mcZ$, $L_{f_1,\mcZ} = 0.2\max_{\alpha\in\mcZ_3}\abs{\alpha} $, $L_{f_2,\mcZ} =0.2$,  $b_{f_1,\mcZ} = 0.8+0.1\max_{\alpha\in\mcZ_3}\alpha^2$, $b_{f_2,\mcZ} =0.1+0.2\max_{\alpha\in\mcZ_3}\abs{\alpha}$ satisfy \cref{assump:lipschitz-bnd-fi}. 
For design of the \loneAC~in \cref{eq:l1-control-law,eq:adaptive_law,eq:state-predictor}, we select $A_e = -10I_3$ and parameterize the filter as $\mcC(s)= \frac{k_f}{s+k_f}I_2$, where $k_f>0$ denotes the bandwidth for both input channels. \cref{table:f16-l1-bnds} lists the bounds on $x_i(t)-x_{\nt,i}(t)$ and $u_j(t)-u_{\nt,j}(t)$ theoretically computed by applying \cref{alg:l1rg} under different $\mcC(s)$ and $T$ with and without using the scaling technique  in \cref{sec:l1ac-separate-bnds}. When applying the scaling technique, we set $\Tau_x^i[k]=0.01$ for each $i,k\in\mathbb{Z}_1^3$ and $k\neq i$, which satisfies \cref{eq:Tx-i-cts}.
Several observations can be made from~\cref{table:f16-l1-bnds}. First, by increasing the filter bandwidth $k_f$ and decreasing $T$, we are able to obtain a smaller $\gamma_1$ satisfying \cref{eq:T-constraint} and achieve tighter bounds for all states and inputs. In fact, if $k_f=10^3$ and $T=10^{-7}$, then $\tilde\rho_u^1$  $\tilde\rho_u^2$ are fairly close to the bounds on $f_1(t,x)$ and $f_2(t,x)$ for $x\in\mcX$, respectively,  which is consistent with \cref{rem:x-xid-discussion-Cs-T-refined}. 
Additionally, with scaling, we could significantly reduce $\tilde \rho^1$ and $\tilde \rho^3$, the bounds on $\gamma(t)-\gamma_\nt(t)$ and $\alpha(t)-\alpha_\nt(t)$, and $\tilde \rho_u^1$ and $\tilde\rho_u^2$, the bounds on $\delta_e(t)-\delta_{e,\nt}(t)$ and  $\delta_f(t)-\delta_{f,\nt}(t)$. Moreover, with $\Tau_x^3$, we can verify that the condition \cref{eq:l1-stability-condition-transformed} holds with $b_{f,\mcX_r}$ as long as $\linfnorm{v}< 1.868$. As mentioned in \cref{rem:conservatism-in-Cs-T}, the conditions \cref{eq:T-constraint,eq:l1-stability-condition-w-extra-Lf} and the resulting bounds $\tilde\rho^i$ and $\tilde \rho_u^j$ could be conservative. As a result, a larger reference command can potentially be allowed in a practical implementation while keeping $x(t)$ to stay in $\mcX$, as demonstrated in the following simulations.
{\setlength{\tabcolsep}{2pt}
\def\arraystretch{1.1}
\begin{table}[]
\centering
\caption{Performance bounds obtained under different filter bandwidth and sample time $T$ with and without (W/O)  scaling}\label{table:f16-l1-bnds}
\begin{tabular}{|l|cc|cc|}
\hline 
\multirow{2}{*}{\diagbox[width=1.5cm]{Var.}{Config.}}                     & \multicolumn{2}{c|}{$k_f=200,\ T=10^{-5}$} & \multicolumn{2}{c|}{$k_f=10^3,\ T=10^{-7}$} \\ \cline{2-5} 
                                              & \multicolumn{1}{l|}{W/O scaling}         & With scaling                 & \multicolumn{1}{l|}{W/O scaling}     & With scaling                    \\ \hline
$\gamma_1~/~b_{f,\mcX_r}$                                   & \multicolumn{2}{c|}{$0.01~/~2.40$}                                                 & \multicolumn{2}{c|}{$2\times 10^{-4}~/~2.40$}                                        \\ \hline
$[\tilde \rho^1,\tilde \rho^2, \tilde\rho^3]$ & \multicolumn{1}{l|}{$.41[1,1,1]$}           & $[.015,.41,.038]$         & \multicolumn{1}{l|}{$.043[1,1,1]$}      & $[.12,4.3,.35]10^{-2}$     \\ \hline
$[\tilde \rho_u^1,\tilde \rho_u^2]$           & \multicolumn{1}{l|}{$[8.15,9.02]$}           & $[4.20,2.85]$                & \multicolumn{1}{l|}{$[2.94,1.69]$}       & $[2.51, 1.03]$                     \\ \hline
\end{tabular}
\vspace{-5mm}
\end{table}}
Following \cref{alg:l1rg}, we used the bounds $\tilde \rho^3$, $\tilde \rho_u^1$ and $\tilde\rho_u^2$ obtained for the case when $k_f=200$ and $T=10^{-5}$,  to tighten the original constraints \cref{eq:cts-F16} and then used the tightened constraints to design the RG, for which we chose $T_d = 0.005$. Considering that $T_d$ was small, we did not consider inter-sample constraint violations and simply set $\hat \mcX_\nt = \mcX_\nt$ and $\hat \mcU_\nt = \mcU_\nt$ instead of \cref{eq:Xn-hat-Un-hat-defn}.
For comparisons, we also designed a robust RG (RRG) that treats the uncertainty $f(t,x)$ as a bounded disturbance $w(t)\in \mathcal W$, where $\mathcal W$ is introduced below \cref{eq:ol-dynamics-f16}. RRG design also uses $O_\infty$ set (defined in \cref{eq:Oinf-defn} for RG design); however, the prediction of the output, which corresponds to $\hat { \mathrm y}_c(k|v,x)$ for RG design, becomes a set-valued one taking into account all possible realizations of the disturbance $w(t)$ (see \cite{garone2017reference-governor-survey} for details). We additionally designed a standard RG by simply ignoring the uncertainty $f(t,x)$.
\vspace{-3mm}
\begin{figure}[h]
    \centering
    \includegraphics[width=0.6\columnwidth]{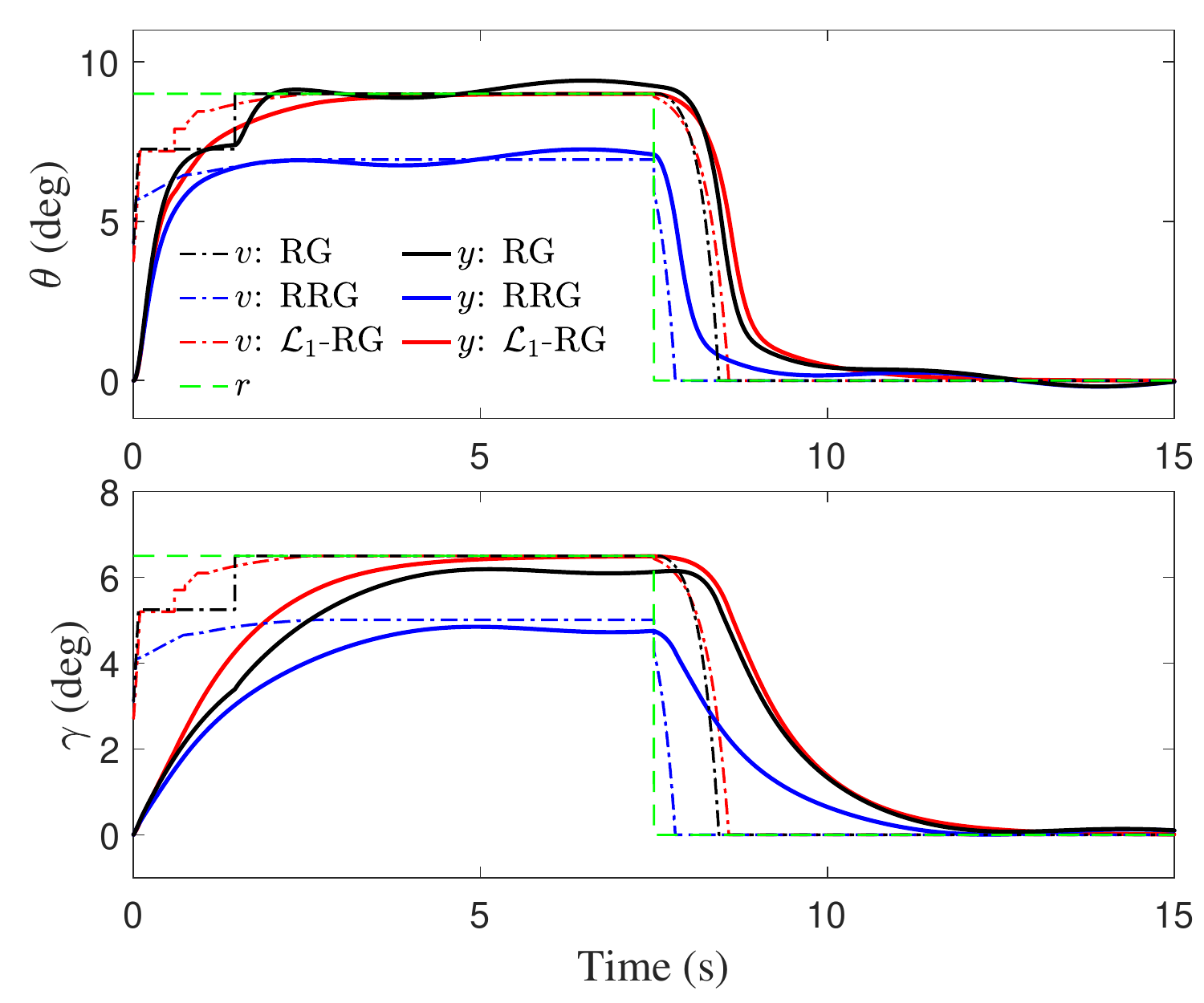} 
    \caption{Tracking performance under RG, RRG and \loneRG}\label{fig:tracking}
        \vspace{-3mm}
\end{figure}
\begin{figure}[h]
    \centering
    \includegraphics[width=0.6\columnwidth]{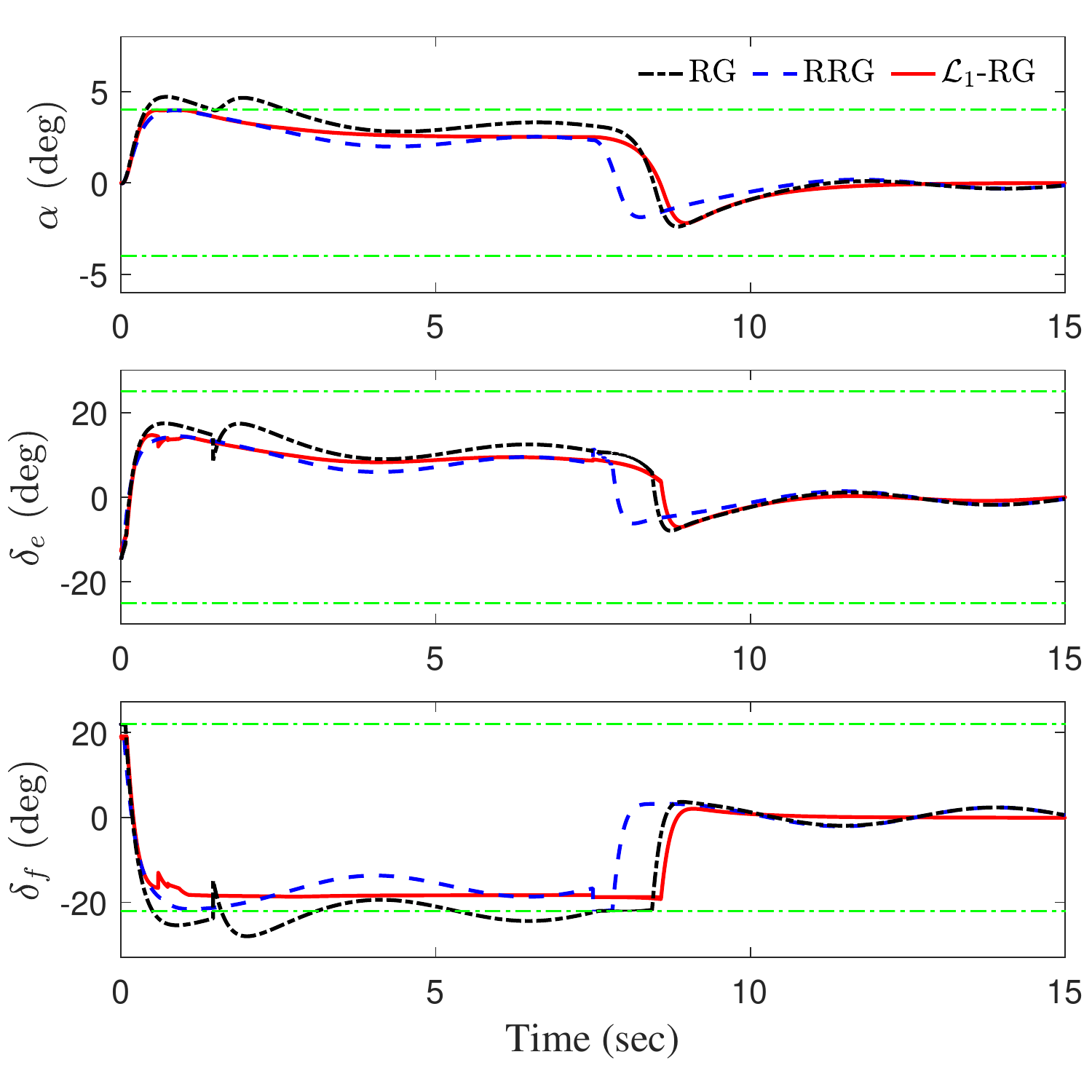} 
        \vspace{-3mm}
    \caption{Trajectories of constrained state and inputs under  RG, RRG and \loneRG. Green dash-dotted lines illustrate the constraints specified in \cref{eq:cts-F16}.}\label{fig:cts_vars}
        \vspace{-3mm}
\end{figure}
\begin{figure}[h]
    \centering
    \includegraphics[width=0.6\columnwidth]{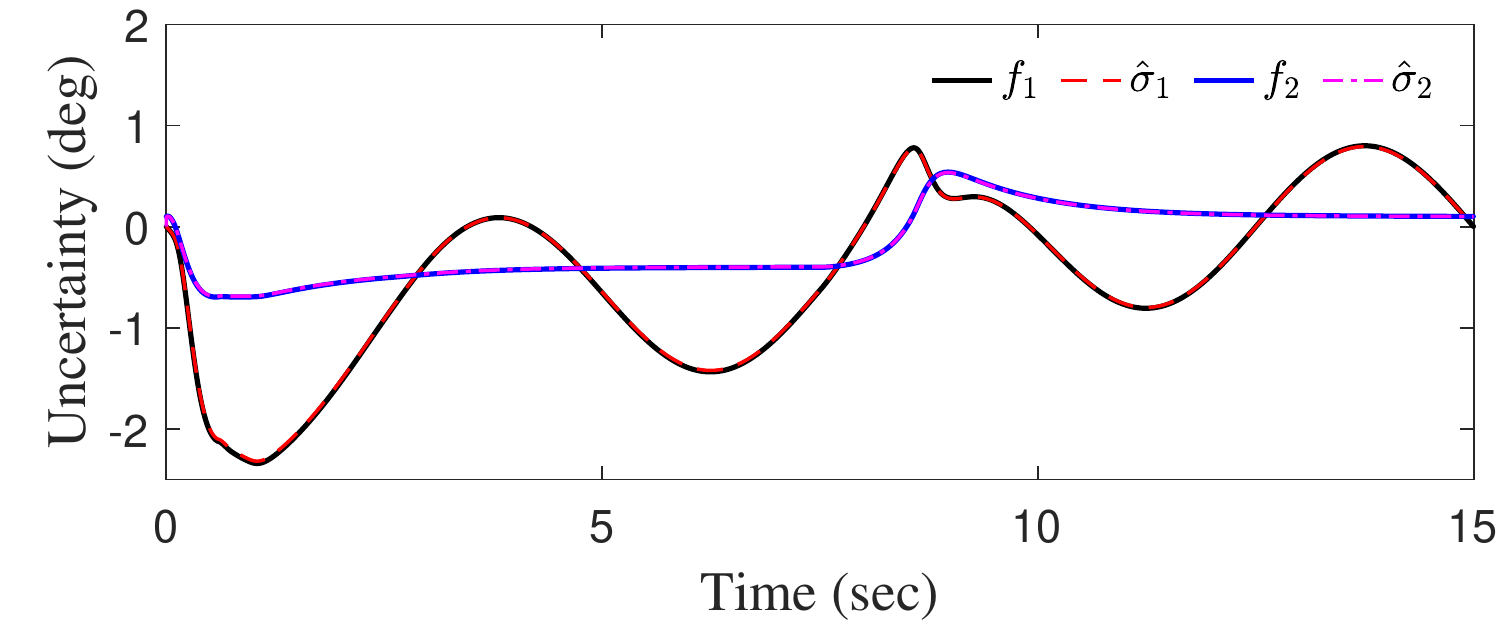} 
    \caption{Actual and estimated uncertainties under \loneRG. The symbol $f_j$ ($\hat\sigma_i$) denotes the $i$th element of $f$ ($\hat\sigma$), for $i=1,2$.}    \label{fig:uncert}
        \vspace{-3mm}
\end{figure}
\subsection{Simulation Results}
As mentioned in \cref{rem:conservatism-in-Cs-T}, the value of $T$ theoretically computed according to \cref{eq:T-constraint} is often unnecessarily small. For the subsequent simulations, we simply adopted an estimation sample time of 1 millisecond, i.e., $T=0.001$ s. As one can see in the subsequent simulation results, all the bounds  derived in \cref{sec:sub-l1rg-f16}  for $k_f=200$ and $T=10^{-5}$ still hold.
The reference command $r(t)$ was set to be $[9,6.5]$ deg for $t\in[0,7.5]$ s, and $[0,0]$ deg for $t\in[7.5,15]$ s. The results are shown in \cref{fig:uncert,fig:cts_vars,fig:tracking}.
 In terms of constraint enforcement, Fig.~\ref{fig:cts_vars} shows that both RRG and \loneRG~successfully enforced all the constraints, while violation of the constraints on the state $\alpha(t)$ and the input $\delta_f(t)$ happened under RG. However, from \cref{fig:tracking}, one can see that the RRG was quite conservative, leading to a large difference between the modified reference  and original reference commands and subsequently large tracking errors for both $\theta(t)$ and $\gamma(t)$ throughout the simulation. In comparison, the modified reference command under RG reached the original reference command, leading to better tracking performance.  Finally, \loneRG~yielded the best tracking performance, driving both $\theta(t)$ and $\gamma(t)$ very close to their commanded values at steady state. While noticeable under RG and RRG, the uncertainty-induced swaying in the outputs at steady state was negligible under \loneRG, thanks to the active compensation of the uncertainty by the \loneAC. From \cref{fig:uncert}, one can see that the estimation of the uncertainty within the \loneRG~was quite accurate. 
\begin{figure}[h]
    \centering
    \includegraphics[width=0.6\columnwidth]{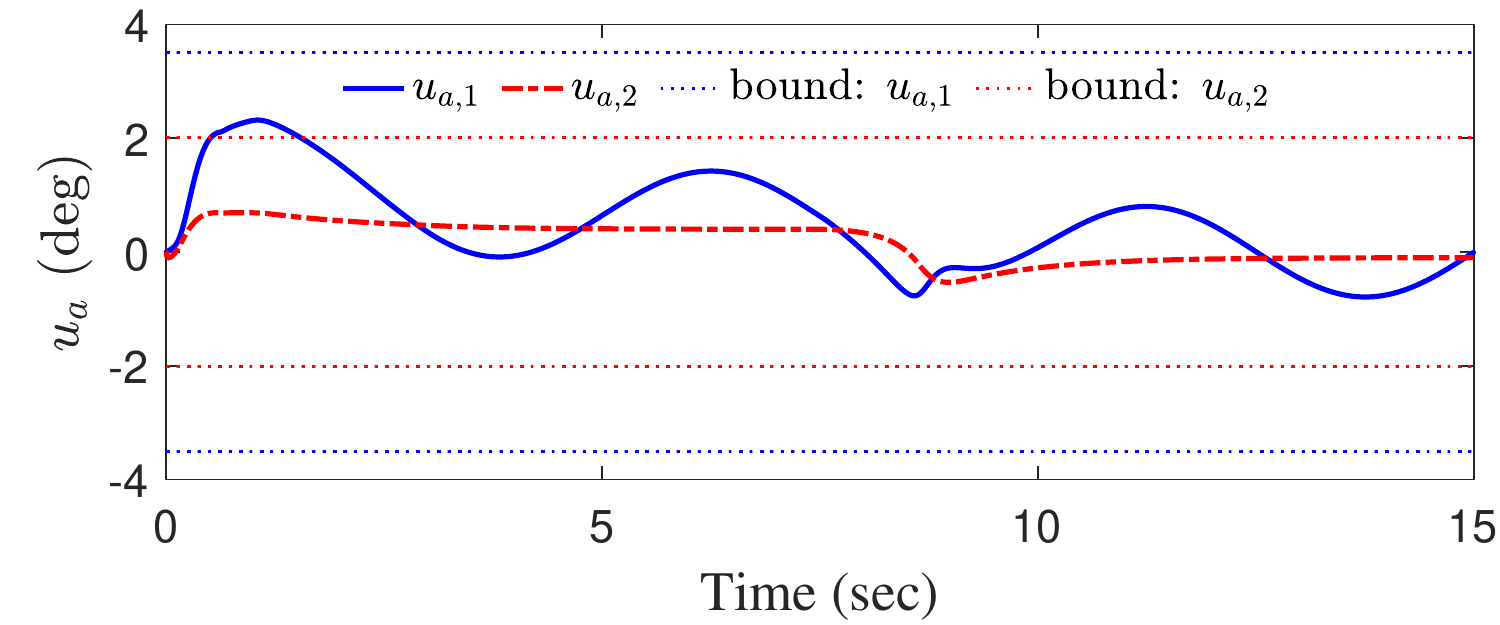} 
    \caption{Adaptive control inputs and theoretical bounds}
    \label{fig:l1inputs}
        \vspace{-3mm}
\end{figure}
\begin{figure}[h]
    \centering
    \includegraphics[width=0.6\columnwidth]{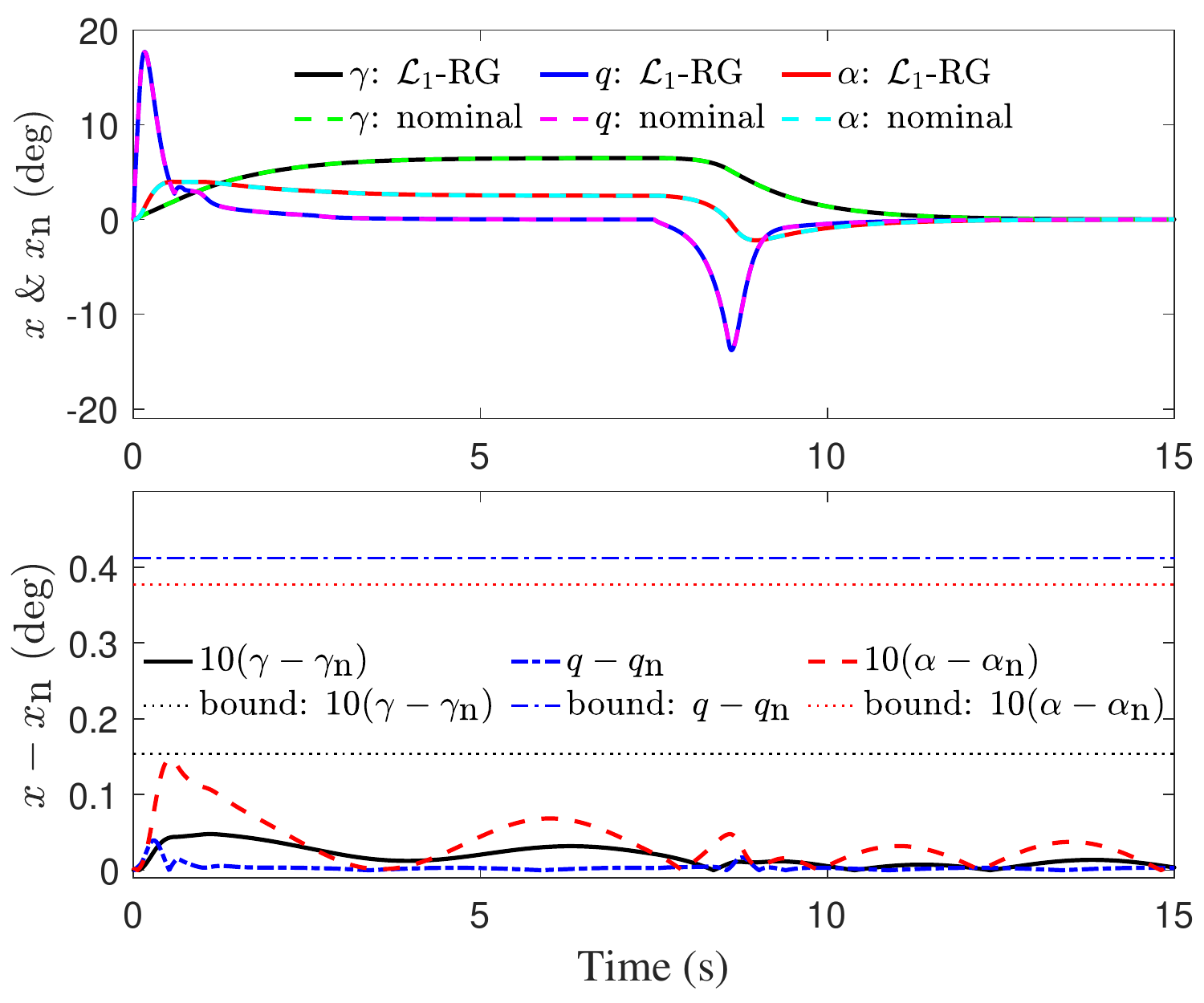} 
        \vspace{-2mm}
    \caption{Trajectories of states of the uncertain system ($x(t)$) under \loneRG~and of the nominal system ($x_\nt(t)$) under the same command $v(t)$ and their differences. The actual-nominal state errors and bounds for $\gamma(t)$ and $\alpha$ are scaled by 10 for a clear illustration.} 
    \label{fig:l1rg-nomrg}
    \vspace{-3mm}
\end{figure}
We next check whether the derived uniform bounds on the errors in states, $x(t)-x_\nt(t)$, and on the adaptive inputs, $u_\at(t)$, hold in the simulation. It can be seen from \cref{fig:l1inputs} that the bounds on both $u_{\at,1}(t)$ and $u_{\at,2}(t)$ were respected in the simulation and moreover are fairly tight. 
 Figure~\ref{fig:l1rg-nomrg} reveals that all actual states under \loneRG~were fairly close to their nominal counterparts, and moreover, the bound on $x_i(t)-x_{\nt,i}(t)$ for each $i\in \mbZ_1^3$ was respected. Note that $x_\nt(t)$ in \cref{fig:l1rg-nomrg} was produced by applying the same reference command $v(t)$ yielded by \loneRG~to the nominal system \cref{eq:nominal-cl-system-w-un}. 
\section{Conclusion}
In this paper, we developed \loneRG, an adaptive reference governor (RG) framework, for control of linear systems with time- and state-dependent uncertainties subject to both state and input constraints. At the core of \loneRG~is an \lonew adaptive controller that provides guaranteed uniform bounds on the errors between states and inputs of the uncertain system and those of a nominal (i.e., uncertainty-free) system. With such uniform error bounds for constraint tightening, a RG designed for the nominal system with tightened constraints guarantees the satisfaction of the original constraints by the actual states and inputs. Simulation results validate the efficacy and advantages of the proposed approach. 

In the future, we will address unmatched uncertainties following \cite{zhao2021RALPV}, and extend the proposed framework to the nonlinear setting leveraging the results in \cite{lakshmanan2020safe,zhao2022tube-rccm-ral}. Additionally, we would like to extend the proposed solution to adaptive MPC. 
\bibliographystyle{ieeetr}
\bibliography{bib/refs-pan,bib/refs-naira}

\appendix
\section{Proofs}
\subsection{Proof of \cref{lem:nominal-system-inter-sample-behavior}}\label{sec:sub-proof-for-lem:nominal-system-inter-sample-behavior}
\begin{proof}
Since the continuous-time system \cref{eq:nominal-cl-system-w-un} has the same states as the discrete-time system \cref{eq:nominal-system-discrete-w-un} at all sampling instants, if \cref{eq:xn-un-satisfy-tightened-csts} holds for \cref{eq:nominal-system-discrete-w-un}, then we have 
\begin{equation}\label{eq:xn-un-at-sampling-instants}
     \xn(kT_d)\in \hat \mcX_\nt\subset \mcX_\nt,\ \un(kT_d)\in \hat \mcU_\nt\subset\mcU_\nt,\quad \forall k\in\mbZ_+,
\end{equation}
for \cref{eq:nominal-cl-system-w-un}. Next we analyze the behavior of \cref{eq:nominal-cl-system-w-un} between adjacent sampling instants. Towards this end, consider any $t= k^\ast T_d+\tau$ for some $k^\ast\in\mbZ_+$ and $\tau\in[0,T_d)$. From \cref{eq:nominal-cl-system-w-un}{}, we have
$
  \xn(t) = \xn(k^\ast T_d+\tau) = ~e^{A_m\tau}\xn(k^\ast T_d)+\int_{k^\ast T_d}^{k^\ast T_d +\tau} e^{A_m(k^\ast T_d+\tau-\xi)}B_vv(\tau)d\xi 
     =  e^{A_m\tau}\xn(k^\ast T_d)+\int_{k^\ast T_d}^{k^\ast T_d +\tau} e^{A_m(k^\ast T_d+\tau-\xi)}d\xi B_vv(k^\ast T_d) = e^{A_m\tau}\xn(k^\ast T_d)+A_m^{-1}\left(e^{A_m\tau}-I_n\right) B_vv(k^\ast T_d), 
$
where the third equality is due to the fact that $v(k^\ast T_d+\tau) = v(k^\ast T_d)$ for all $\tau\in[0,T_d)$. As a result, we have $\xn(t)-\xn(k^\ast T_d) = \xn(k^\ast T_d+\tau)-\xn(k^\ast T_d) = \left(e^{A_m\tau}-I_n\right)\left(\xn(k^\ast T_d)+A_m^{-1}B_v v(k^\ast T_d) \right)$. Thus, we have 
\begin{subequations}\label{eq:xn-un-t-kTd-diff-bnd} 
\begin{align}
    \infnorm{\xn(t)-\xn(k^\ast T_d)} & \leq \nu(T_d) \label{eq:xn-t-kTd-diff-bnd}, \\
    \infnorm{\un(t)-\un(k^\ast T_d)} & \leq \infnorm{K_x}\nu(T_d) \label{eq:un-t-kTd-diff-bnd}, 
\end{align}
\end{subequations}
where $\nu(T_d)$ defined in \cref{eq:nu-Td-defn}, while \cref{eq:un-t-kTd-diff-bnd} is due to the fact that $\un(t)-\un(k^\ast T_d) = K_x \left( \xn(t)-\xn(k^\ast T_d) \right)$.
Considering \cref{eq:xn-un-t-kTd-diff-bnd,eq:xn-un-at-sampling-instants,eq:Xn-hat-Un-hat-defn}, we have $\xn(t)\in \mcX_\nt$ and $\un(t)\in \mcX_\nt$ for all $t\geq 0$. The proof is complete.
\end{proof}

\subsection{Proof of \cref{lem:ref-xr-ur-bnd}}\label{sec:sub-proof-for-lem:ref-xr-ur-bnd}
\begin{proof}
Rewriting the dynamics of the reference system in \eqref{eq:ref-system} in the Laplace domain yields
\begin{equation}\label{eq:xr-expression-w-G-f}
   \xr(s) = \mcG_{xm}(s)\laplace{f(t,x_\textup{r}(t))} + \mcH_{xv}(s)v(s)+\xin(s).
\end{equation}
Therefore, from \cref{lem:L1-Linf-relation}, for any $\xi>0$, we have 
\begin{equation}\label{eq:x_r_linfnorm_truc_bound}
    \linfnormtruc{x_\textup{r}}{\xi} \leq \lonenorm{\mcG_{xm}(s)} \linfnormtruc{\eta_\rt}{\xi}+ \lonenorm{\mcH_{xv}(s)}\linfnorm{v}+ \linfnorm{\xin},
\end{equation}
where $\eta_\rt(t)$ is defined in \cref{eq:eta-eta_r-defn}. 
If \cref{eq:xref-bnd} is not true, since $x_\textup{r}(t)$ is continuous and $\infnorm{x_\textup{r}(0)}<\rho_r$, there exists a $\tau\!>\!0$ such that 
\begin{equation}
    \infnorm{x_\textup{r}(t)}<\rho_r, \ \forall t\in[0,\tau),\ \textup{and}\ \infnorm{x_\textup{r}(\tau)}=\rho_r,
\end{equation} 
which implies 
$x_\rt(t)\in \Omega(\rho_r)$ for any $t$ in $[0,\tau]$. Further considering \cref{eq:bnd-f} that results from \cref{assump:lipschitz-bnd-fi}, we have
\begin{equation}\label{eq:f-xr-rhor-tau}
    \linfnormtruc{\eta_\rt}{\tau}\leq b_{f,\Omega(\rho_r)}.
\end{equation}
Plugging the preceding inequality into  \cref{eq:x_r_linfnorm_truc_bound} leads to 
\begin{equation}
    \rho_r \leq \lonenorm{\mcG_{xm}(s)} b_{f,\Omega(\rho_r)}+ \lonenorm{\mcH_{xv}(s)}\linfnorm{v} + \rho_\textup{in},
\end{equation}
which contradicts the condition \eqref{eq:l1-stability-condition}. Therefore, \eqref{eq:xref-bnd} is true. Equation \eqref{eq:uref-bnd} immediately follows from \eqref{eq:xref-bnd} and \cref{eq:ref-system}. 
\end{proof}

\subsection{Proof of \cref{lem:xtilde-bnd}}\label{sec:sub-proof-lem:xtilde-bnd}
\begin{proof}
Due to \cref{eq:x-u-tau-bnd-assump-in-lemma}, we have $x(t)\in \Omega(\rho)$ for any $t$ in $[0,\tau]$. Further considering \cref{eq:bnd-f} that results from \cref{assump:lipschitz-bnd-fi}, we have
\begin{equation}\label{eq:f-bnd-in-0-tau}
    \infnorm{f(t,x(t))} =\infnorm{\eta(t)} \leq b_{f,\Omega(\rho)},\quad \forall t\in [0,\tau].
\end{equation} 
From \eqref{eq:prediction-error}, for any $0\leq t <T$ and $i\in\mbZ_0$, we have
{
\begin{align}
    \tilx(iT+t) = & ~e^{A_et}\tilx(iT)+\int_{iT}^{iT+t} e^{A_e(iT+t-\xi)}[B \ B^\perp]
   \begin{bmatrix}
   \hsigma_1(iT) \\
   \hsigma_2(iT)
   \end{bmatrix}
   d\xi  - \int_{iT}^{iT+t} e^{A_e(iT+t-\xi)}B \eta(\xi) d\xi \nonumber \\ 
     = & ~e^{A_et}\tilx(iT)+\int_{0}^{t} e^{A_e(t-\xi)}[B \ B^\perp]
   \begin{bmatrix}
   \hsigma_1(iT) \\
   \hsigma_2(iT)
   \end{bmatrix}
   d\xi  - \int_{0}^{t} e^{A_e(t-\xi)}B \eta(iT+\xi) d\xi. 
    \label{eq:tilx-iTplust}
\end{align}}Considering the adaptive law \eqref{eq:adaptive_law}, the preceding equality implies
\begin{align}
  \hspace{-3mm}  \tilx((i+1)T) \!= \!- \!\int_{0}^{T}\! e^{A_e(T-\xi)}B \eta(iT\!+\!\xi) d\xi.
\end{align}
Therefore, for any $i\in \mbZ_0$ with $(i+1)T\leq \tau$, we have
\begin{align}
    \infnorm{\tilx((i+1)T)} &\leq \!\int_{0}^{T}\! \infnorm{e^{A_e(T-\xi)}B} \infnorm{\eta(iT\!+\!\xi)}d\xi \leq \bar \alpha_0(T) b_{f,\Omega(\rho)},
\end{align}
where $\bar \alpha_0(T)$ is defined in \cref{eq:alpha_0-defn}, and the last inequality is due to \cref{eq:f-bnd-in-0-tau}. 
Since $\tilx(0)=0$, we therefore have 
\begin{equation}\label{eq:tilx-iT-bnd}
    \infnorm{\tilx(iT)}\leq \bar \alpha_0(T) b_{f,\Omega(\rho)} \leq \gamma_0(T), \; \forall  iT\leq \tau, i\in \mbZ_0.
\end{equation}
Now consider any $t\in(0,T]$ such that $iT+t\leq \tau$ with $i\in \mbZ_0$. From \eqref{eq:tilx-iTplust} and the adaptive law \cref{eq:adaptive_law}, we have
\begin{align}
    \infnorm{\tilx(iT+t)} \leq  & \infnorm{e^{A_et}}\infnorm{\tilx(iT)} +  \int_{0}^{t} \infnorm{e^{A_e(t-\xi)}\Phi^{-1}(T)e^{A_eT}} \infnorm{\tilx(iT)}  d\xi   \nonumber \\
& +\int_0^t \infnorm{e^{A_e(t-\xi)}B} \infnorm{\eta(iT+\xi)} d\xi \nonumber \\ \leq &  \left(\bar \alpha_1(T)+\bar \alpha_2(T)+1 \right) \bar\alpha_0(T) b_{f,\Omega(\rho)} = \gamma_0(T),  \label{eq:tilx-iT+t-bnd} 
\end{align}
where $\bar\alpha_i(T)$ ($i=0,1,2$) are defined in \cref{eq:alpha_0-defn,eq:alpha_1-defn,eq:alpha_2-defn}, and the last inequality is partially due to the fact that $\int_0^t \infnorm{e^{A_e(t-\xi)}B}d\xi\leq\int_0^T \infnorm{e^{A_e(T-\xi)}B}\!d\xi\!=\!\bar\alpha_0(T) $. Equations \cref{eq:tilx-iT-bnd,eq:tilx-iT+t-bnd} imply \eqref{eq:tilx_tau-leq-gamma0}. 
\end{proof}

\subsection{Proof of \cref{them:x-xref-bnd}}\label{sec:sub-proof-them:x-xref-bnd}
\begin{proof}
We first prove \cref{eq:xref-x-bnd,eq:uref-u-bnd}  by contradiction. Assume \eqref{eq:xref-x-bnd} or \eqref{eq:uref-u-bnd} do not hold. Since $\infnorm{x_\textup{r}(0)-x(0)}=0<\gamma_1$ and $\infnorm{u_{\textup{r}}(0)-u_\at(0)}=0<\gamma_2$, and $x(t)$, $u_\at(t)$, $x_\textup{r}(t)$ and $u_{\textup{r}}(t)$ are all continuous, there must exist an instant $\tau$ such that
\begin{equation}
 \hspace{-2mm}   \infnorm{x_\textup{r}(\tau)-x(\tau)} = \gamma_1 \textup{ or }  \infnorm{u_{\textup{r}}(\tau)-u(\tau)} = \gamma_2,
\end{equation}
while 
\begin{equation} 
   \hspace{-3mm} \infnorm{x_\textup{r}(t)\!-\!x(t)}\! <\! \gamma_1, \  \infnorm{u_{\textup{r}}(t)\!-\!u(t)} \!<\! \gamma_2, \ \forall t\in [0,\tau).
\end{equation}
This implies that at least one of the following equalities hold:
\begin{equation}\label{eq:xr-x-ur-u-linfnorm-tau}
    \linfnormtruc{x_\textup{r}-x}{\tau} = \gamma_1, \quad \linfnormtruc{u_{\textup{r}}-u_\at}{\tau} = \gamma_2.
\end{equation}
Note that $\linfnorm{x_\textup{r}} \leq \rho_r<\rho $ according to \cref{lem:ref-xr-ur-bnd} and $\linfnorm{x}\leq \rho_r+\gamma_1 = \rho$ from \cref{eq:xr-x-ur-u-linfnorm-tau}. Further considering \cref{eq:lipschitz-cond-f} that results from \cref{assump:lipschitz-bnd-fi}, we have that
\begin{equation}\label{eq:f-xr-x-tau-bnd}
 \hspace{-1mm}   \infnorm{f(t,x_\textup{r}(t))\!-\!f(t,x(t))} \!\leq\! L_{f,\Omega(\rho)}\! \linfnormtruc{x_\textup{r}\!-\!x}{\tau}\!, \ \forall t \!\in \![0,\tau].
\end{equation}
The control laws in \cref{eq:l1-control-law} and  \cref{eq:ref-system} indicate 
\begin{align}
&   \hspace{-2mm} u_{\textup{r}}(s) - u_\at(s)  = -\mcC(s)\laplace{f(t,x_\textup{r})-\hsigma_1(s)}  \hspace{-2mm}  = \mcC(s)\laplace{f(t,x)\!-\!f(t,x_\textup{r})} + \mcC(s)(\hsigma_1(s) \!-\! \laplace{f(t,x)}).\label{eq:omega-ur-u}
\end{align}
Equation \eqref{eq:prediction-error} indicates that
\begin{equation}\label{eq:hsigma-sigma-s}
    \hsigma_1(s) - \laplace{f(t,x)}) = B^\dagger(sI_n-A_e)\tilx(s).
\end{equation}
Considering \eqref{eq:dynamics-uncertain}, \cref{eq:l1-control-law} and \cref{eq:hsigma-sigma-s}, we have 
\begin{equation}
     x(s) = \mcG_{xm}(s) \laplace{f(t,x)} + \mcH_{xv}(s)v(s) + \xin(s)   - \mcH_{xm}(s)\mcC(s)B^\dagger(sI_n-A_e)\tilx(s),
\end{equation}
  which, together with \cref{eq:xr-expression-w-G-f}, implies
  \begin{equation}
x_\textup{r}(s) - x(s) = \mcG_{xm}(s) \laplace{f(t,x_\textup{r})-f(t,x)}  +\mcH_{xm}(s)\mcC(s)B^\dagger(sI_n-A_e)\tilx(s).
\end{equation}
Therefore, further considering \eqref{eq:f-xr-x-tau-bnd} and \cref{lem:xtilde-bnd}, we have 
\begin{align}
    \linfnormtruc{x_\textup{r}-x}{\tau} \leq & \lonenorm{\mcG_{xm}} L_{f,\Omega(\rho)} \linfnormtruc{x_\textup{r}-x}{\tau}  +\! \lonenorm{\mcH_{xm}(s)\mcC(s)B^\dagger(sI_n-A_e)}\!\gamma_0(T).\nonumber
\end{align}
The preceding equation, together with \cref{eq:l1-stability-condition-Lf}, leads to 
\begin{align}
   \hspace{-2mm} \linfnormtruc{x_\textup{r}-x}{\tau} &\!\leq\! \frac{\lonenorm{\mcH_{xm}(s)\mcC(s)B^\dagger(sI_n\!-\!A_e)}}{1-    \lonenorm{\mcG_{xm}}  L_{f,\Omega(\rho)}} \gamma_0(T),
\end{align}
which, together with the sample time constraint \cref{eq:T-constraint}, indicates that 
\begin{equation}\label{eq:xr-x<gamma1}
    \linfnormtruc{x_\textup{r}-x}{\tau}  < \gamma_1. 
\end{equation}

On the other hand, it follows from \cref{eq:f-xr-x-tau-bnd,eq:omega-ur-u,eq:hsigma-sigma-s,eq:xr-x<gamma1} that
\begin{align*}
  \linfnormtruc{ u_{\textup{r}}- u_\at}{\tau}  &  \leq  \lonenorm{\mcC(s)} L_{f,\Omega(\rho)}  \linfnormtruc{x_\textup{r}-x}{\tau} + \lonenorm{\mcC(s)B^\dagger(sI_n-A_e)}\linfnormtruc{\tilx}{\tau} \nonumber \\
    & < \lonenorm{\mcC(s)} L_{f,\Omega(\rho)} \gamma_1 + \lonenorm{\mcC(s)B^\dagger(sI_n-A_e)}\gamma_0(T).
\end{align*}
Further considering the definition in \cref{eq:gamma2-defn}, we have
\begin{equation}\label{eq:ur-u<gamma2}
    \linfnormtruc{ u_{\textup{r}}- u_\at}{\tau}<\gamma_2.
\end{equation}
Note that \cref{eq:xr-x<gamma1} and \cref{eq:ur-u<gamma2} contradict the equalities in \cref{eq:xr-x-ur-u-linfnorm-tau}, which proves \cref{eq:xref-x-bnd,eq:uref-u-bnd}. The bounds in \cref{eq:x-bnd,eq:ua-bnd} follow directly from  \cref{eq:xref-x-bnd,eq:uref-u-bnd,eq:xref-bnd,eq:uref-bnd} and the definitions of $\rho$ and $\rho_{u_\at}$ in \cref{eq:rho-defn,eq:rho-u-defn}.
The proof is complete. 
\end{proof}

\subsection{Proof of \cref{lem:ref-id-bnd}}\label{sec:sub-proof-lem:ref-id-bound}
\begin{proof}
From \cref{eq:nominal-cl-system,eq:ref-system}, we have 
\begin{equation}\label{eq:xr-xn-expression}
\hspace{-2mm}   \xr(s)-\xn(s) = G_{xm}(s) \laplace{f(t,x_\textup{r})} = G_{xm}(s) \laplace{\eta_\textup{r}(t)}.
\end{equation}
According to \cref{lem:ref-xr-ur-bnd},  we have $x_\rt(t) \in \Omega(\rho_r)$ for any $t\geq0$. Further considering 
 \cref{eq:bnd-f} that results from \cref{assump:lipschitz-bnd-fi}, we have $\linfnorm{\eta_\rt}\leq b_{f,\Omega(\rho_r)} $, which, together with \cref{eq:xr-xn-expression}, leads to \cref{eq:xref-xid-bnd}. 
\end{proof}

\end{document}